\def\BibTeX{{\rm B\kern-.05em{\sc i\kern-.025em b}\kern-.08em
    T\kern-.1667em\lower.7ex\hbox{E}\kern-.125emX}}
\newtheorem{Proposition}{Proposition}
\newtheorem{Remark}{Remark}
\begin{document}

\title{Joint Optimization of Pilot Length, Pilot Assignment, and Power Allocation for Cell-free MIMO Systems with Graph Neural Networks}
\author{ 
{Yao Peng, Tingting Liu, and Chenyang Yang}
\thanks{Y. Peng, T. Liu, and C. Yang are with the School of Electronics and Information Engineering, Beihang University, Beijing, China (Email: \{pengyyao, ttliu, cyyang\}@buaa.edu.cn).
Corresponding author: T. Liu.

This work has been submitted to the IEEE for possible publication. Copyright may be transferred without notice, after which this version may no longer be accessible.
}
}%

\markboth{Journal of \LaTeX\ Class Files,~Vol.~xx, No.~xx, xx~2025}%
{Joint Optimization of Pilot Length, Pilot Assignment, and Power Allocation for Cell-free MIMO Systems with Graph Neural Networks}

\maketitle

\begin{abstract}
In user-centric cell-free multi-antenna systems, pilot contamination degrades spectral efficiency (SE) severely. To mitigate pilot contamination, existing works jointly optimize pilot assignment and power allocation by assuming fixed pilot length, which fail to balance pilot overhead against the contamination. To maximize net-SE, we jointly optimize pilot length, pilot assignment, and power allocation with deep learning. Since the pilot length is a variable, the size of pilot assignment matrix is unknown during the optimization. To cope with the challenge, we design size-generalizable graph neural networks (GNNs). We prove that pilot assignment policy is a one-to-many mapping, and improperly designed GNNs cannot learn the optimal policy. We tackle this issue by introducing feature enhancement. To improve learning performance, we design a contamination-aware attention mechanism for the GNNs. Given that pilot assignment and power allocation respectively depend on large- and small-scale channels, we develop a dual-timescale GNN framework to explore the potential. To reduce inference time, a single-timescale GNN is also designed. Simulation results show that the designed GNNs outperform existing methods in terms of net-SE, training complexity, and inference time, and can be well generalized across problem scales and channels.

\end{abstract}

\begin{IEEEkeywords}
	Cell-free, joint pilot length, pilot assignment, and power allocation, graph neural network.
\end{IEEEkeywords}

%
\section{Introduction}
User-centric cell-free multiple-input multiple-output (CF-MIMO) has emerged as a promising technology for future wireless networks. By enabling nearby access points (APs) to serve each user equipment (UE) cooperatively \cite{2020B5G}, the signal-to-interference-plus-noise ratio (SINR) can be increased, providing substantial gains in both the overall system spectral efficiency (SE) and user experience \cite{2022UCSurvey-Ammar}.

Since assigning a unique pilot sequence (PS) to each UE incurs large pilot overhead, pilot reuse is inevitable. However, this introduces pilot contamination, which severely degrades the SE gains \cite{2022UCSurvey-Ammar}. To improve the SE of the CF-MIMO systems under pilot contamination, pilot assignment and power allocation have been jointly optimized \cite{2024JPCPA-CF-Ren,2024JPCPA-CF-Khan,JPAPC2025}. To balance the pilot overhead with pilot contamination, the pilot length needs to be optimized. Specifically, longer PSs reduce pilot contamination by allowing more orthogonal PSs, yet they also reduce the resources for data transmission. 
Therefore, to maximize the net-SE, which accounts for the impact of pilot overhead, the pilot length should be jointly optimized together with pilot assignment and power allocation.

\subsection{Related Works}
We review existing works in the literature related to power allocation optimization, pilot length and pilot assignment optimization, and pilot assignment and power allocation joint optimization.

\subsubsection{Optimizing power allocation}
Since power allocation problems are usually non-convex, recent works have employed deep neural networks (DNNs) to learn power allocation policies, which are the mappings from environmental states to power allocation \cite{2023PAFNN-Fab,2024PAGNN-Mishra,CSIPower2024,2024GC_PY}, to achieve near-optimal performance at low inference latency.
Compared with fully-connected neural networks (FNNs) \cite{2023PAFNN-Fab}, graph neural networks (GNNs) \cite{2024PAGNN-Mishra,CSIPower2024,2024GC_PY} offer significant advantages in terms of training efficiency and size generalizability.
This is because GNNs can leverage the permutation properties of optimal policies, a form of prior knowledge in wireless communication problems \cite{LSJ2023MGNN,2024GNN_or_CNN}.

However, perfect channel information was assumed known in \cite{CSIPower2024,2024GC_PY}. Although the channel estimation errors were considered in \cite{2024PAGNN-Mishra,2023PAFNN-Fab}, the pilot length or pilot assignment was not optimized, leaving the potential in performance gain from optimizing power allocation unexploited.

\subsubsection{Optimizing pilot length and assignment}
The optimization of pilot resources aims to improve net-SE by trading off the impacts of pilot overhead and channel estimation accuracy.

In contamination-free scenarios (e.g., when the number of UEs is much smaller than the number of available orthogonal PSs), the primary focus is on optimizing pilot length to balance pilot overhead and channel estimation errors \cite{Cheng2017,Zhou2023}. In such cases,  each UE is assigned a unique orthogonal PS, and not all available PSs need to be utilized.
In contamination-limited scenarios (e.g., cell-free systems where the number of UEs exceeds the number of available orthogonal PSs), all PSs must be assigned, and the key challenge shifts to efficiently reusing pilots for mitigating the contamination. Since pilot length determines the number of orthogonal PSs, it is essential to select an appropriate pilot length and optimize the corresponding pilot assignment under different levels of pilot contamination.

Several studies have optimized pilot assignment given a predetermined pilot length \cite{cellfreeVersus2017,2020Tabu,2020GraphColoring,2021KCut,ClusteringPA2025,2022pilotRDL-Rahmani}. Search-based algorithms \cite{cellfreeVersus2017,2020Tabu} start with an initial assignment and iteratively reassign PSs based on specific rules. Graph-based methods \cite{2020GraphColoring,2021KCut,ClusteringPA2025} first construct interference graphs, where vertices represent UEs and the weights of edges quantify the potential contamination between UEs. Then, pilot assignment is performed using graph coloring \cite{2020GraphColoring} or Max K-cut algorithms \cite{2021KCut,ClusteringPA2025}. For distributed implementation, multi-agent reinforcement learning (MARL) is adopted in \cite{2022pilotRDL-Rahmani}.

Recently, the pilot length and assignment have been jointly optimized in \cite{Peng2023,Wu2023}. In \cite{Peng2023}, the maximal number of UEs whose quality-of-service (QoS) constraints can be satisfied was found by iteratively refining an interference graph based on the current QoS of UEs, updating the assignment with Dsatur coloring, and setting the pilot length to the number of assigned PSs.
In \cite{Wu2023}, net-SE was maximized via an evolutionary algorithm. Despite their effectiveness, both methods exhibit high computational complexity.

\subsubsection{Joint optimization of pilot assignment and power allocation}
For a predetermined pilot length, pilot assignment and power allocation were jointly optimized in \cite{2024JPCPA-CF-Ren,2024JPCPA-CF-Khan,JPAPC2025}. It is worthy noting that pilot assignment should be optimized on a large timescale (e.g., seconds), because it is closely tied to AP-UE associations, and frequent adjustments introduce significant signalling overhead. In contrast, power allocation is typically optimized on a much smaller timescale (e.g., milliseconds) to pursue optimal performance. This difference renders the joint optimization a dual-timescale (DTS) problem. Yet single-timescale (STS) problems were formulated in \cite{2024JPCPA-CF-Ren,2024JPCPA-CF-Khan,JPAPC2025}, which optimize both variables only on the large timescale.

Specifically, in \cite{2024JPCPA-CF-Ren}, an SE maximization problem was formulated as a series of Max K-cut problems, and a sequential algorithm was proposed to solve them. To enable real-time implementation, the learning-based approach was adopted in \cite{2024JPCPA-CF-Khan,JPAPC2025}. In \cite{2024JPCPA-CF-Khan}, an FNN-based method was proposed, where one FNN handles power allocation while multiple parallel FNNs
are used to assign PSs to each UE individually. In \cite{JPAPC2025}, MARL was employed, where each UE acts as an agent that selects a PS and optimizes its own transmit power. However, these DNNs suffer from high training complexity.

\subsection{Motivation and Major Contributions}
In this paper, we jointly optimize pilot length, pilot assignment, and power allocation in both DTS and STS with deep learning. Since the pilot length is an optimization variable, the number of PSs is unknown. Consequently, the pilot assignment matrix, which represents the assignment of PSs to UEs, has an unknown size. For making the joint optimization, the pilot assignment matrix should be an output of the DNN. However, any DNN must have a predetermined output size.

To resolve the challenge due to the size uncertainty, we optimize another pilot assignment matrix with a predetermined size, where each row corresponds to one PS, and the total number of rows is the maximal number of assignable PSs. This maximal number of PSs can be set as the number of UEs, which is a conservative value to avoid pilot contamination. When fewer PSs are actually assigned than the maximum after the optimization, the rows corresponding to unassigned PSs are set to zero. 

Although the size of the pilot assignment matrix to be optimized is predetermined, it still depends on the number of UEs, which is dynamic in practice. Therefore, the DNN for the joint optimization must be generalizable to different numbers of UEs.

Previous works have confirmed that GNNs exhibit size generalizability \cite{LSJ2023MGNN,2024GNN_or_CNN}. This capability is achieved by designing parameter-sharing schemes that harness the permutation properties of optimal policies, thereby making the trainable parameters independent of the problem scales. This observation motivates us to develop GNN-based frameworks for joint optimization problems.

The major contributions are summarized as follows.
\begin{itemize}
	\item We design GNNs that jointly optimize pilot length, pilot assignment, and power allocation to maximize net-SE.
 Consequently, the proposed GNNs outperform existing methods.
 By leveraging the permutation properties of optimal policies, the designed GNNs can be well generalized to varying numbers of UEs, APs, and antennas.
	\item We prove that the pilot assignment policies are one-to-many mappings. When the GNNs for pilot assignment are trained using only environmental states as input, they often produce infeasible solutions or suffer from severe pilot contamination. To convert the pilot assignment policies to be learned into one-to-one mappings, we resort to feature enhancement. Furthermore, to improve the learning performance, we design an attention mechanism that can measure potential pilot contamination for the GNNs for pilot assignment.
	\item To explore the potential in boosting the net-SE, we develop a DTS framework comprising two GNNs for pilot assignment and power allocation, each operating at different timescales. To strike a balance between system performance and inference complexity, we also develop an STS framework including a GNN for joint optimization on the large timescale.
   \item Simulations show that the proposed GNNs outperform numerical and learning-based baselines in terms of net-SE, training complexity, inference latency, and can be well generalized to different system settings. Moreover, in the scenario with a large number of antennas, the STS framework incurs almost no loss in net-SE compared to the DTS framework, while having lower space complexity and faster inference.
\end{itemize}

The remainder of this paper is organized as follows. Section \ref{sec:System-Model} introduces the system model. Section \ref{sec:Problem-Policy} introduces the joint optimization problems, the optimal policies to be learned, and the properties of these policies. Section \ref{sec:GNN} introduces the designed GNNs. Sections \ref{sec:Result} and \ref{sec:Conclusion} provide simulation results and conclusions, respectively.

\emph{Notations:}
$\mathbb R$ and ${\mathbb C}$ denote the sets of real and complex numbers, respectively. ${\bf A}=[a_{lk}]_{L\times K}$ denotes a matrix, where $a_{lk}$ is the element in the $l$th row and the $k$th column, $l=1,\cdots, L$ and $k=1,\cdots,K$. ${\bf 1}_{K\times 1}$ denotes a column vector with all the $K$ elements equal to one. ${\bf I}_N$ denotes an $N\times N$ identity matrix. ${\bf\Pi}$ denotes a permutation matrix.
$(\cdot)^{\rm T}$ and $(\cdot)^{\rm H}$ denote the transpose and Hermitian transpose, respectively. ${\sf Tr}(\cdot)$ denotes the trace of a matrix.
$\Vert\cdot\Vert$ denotes the Frobenius norm. $|\cdot|$ denotes the magnitude of a complex number or the cardinality of a set.
$\otimes$ and $\odot$ denote the Kronecker and Hadamard products, respectively.

%

\section{System Model}\label{sec:System-Model}
Consider a user-centric time-division duplex (TDD) CF-MIMO system, where $M$ APs, each equipped with $N$ antennas, serve $K$ single-antenna UEs. All APs are connected to a central unit (CU), and each UE is associated with nearby APs.

\subsection{Frame Structure and Channel Model}
The structure of a frame is shown in Fig. \ref{fig:vary_CSI}. Each frame consists of $N_{\sf T}$ subframes. Each subframe is divided into $\tau_{\sf c}$ time slots, where $\tau_{\sf p}$ slots are used for PS transmission, and the remaining $\tau_{\sf c}-\tau_{\sf p}$ slots are used for data transmission.

\vspace{-3mm}
\begin{figure}[!htp]
	\centering
	\includegraphics[scale=0.5]{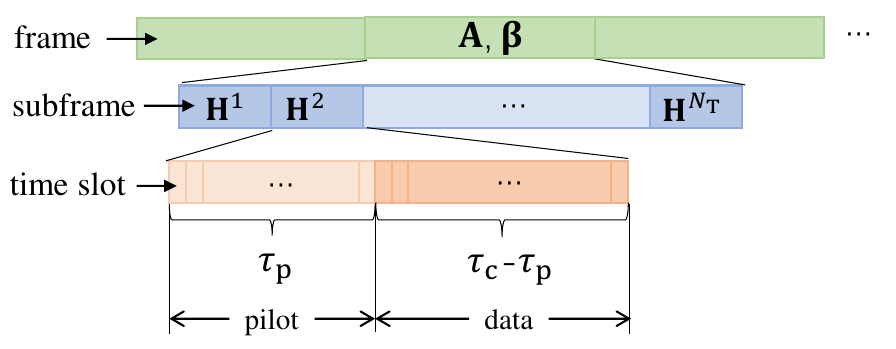}
	\caption{Frame structure of the TDD system.} \label{fig:vary_CSI}
\end{figure}
\vspace{-3mm}

We assume that the large-scale fading (LSF) channels remain unchanged in a frame, and the small-scale fading (SSF) channels are independent and identically distributed (i.i.d.) across subframes and UEs. Thus, $\tau_{\sf c}$ can reflect the channel coherence time.

The user association depends on the LSF channels and thus remains unchanged within a frame. Denote
 ${\bf A}=[a_{mk}]_{M\times K}\in\{0,1\}^{M\times K}$ as the association matrix, where $a_{mk}=1$ if the $k$th UE (denoted as ${\sf UE}_k$) is associated with the $m$th AP (denoted as ${\sf AP}_m$), and $a_{mk}=0$ otherwise.

The channel vector from ${\sf AP}_m$ to ${\sf UE}_k$ in the $t$th subframe is $\sqrt{\beta_{mk}}{\bf h}_{mk}^{t}\in{\mathbb C}^{N\times 1}$, where $\beta_{mk}$ is the LSF channel gain, and ${\bf h}_{mk}^{t}$ is the SSF channel vector. As in \cite{foundation_CF_2021}, we assume that the channel correlation matrix ${\bf R}_{mk}\triangleq{\mathbb E}_{{\bf h}_{mk}^{t}}\{(\sqrt{\beta_{mk}}{\bf h}_{mk}^{t})(\sqrt{\beta_{mk}}{\bf h}_{mk}^{t})^{\sf H}\}\in{\mathbb C}^{N\times N}$ is known, where the expectation is taken over ${\bf h}_{mk}^{ t}$. Hence, $\beta_{mk}\triangleq{\sf Tr}({\bf R}_{mk})/N$ is also known.

Denote ${\boldsymbol\beta}=[\beta_{mk}]_{M\times K}\in{\mathbb R}^{M\times K}$ as the LSF channel matrix, and
${\bf H}^{t}=[{\bf h}_{mk}^{t}]_{M\times K}\in{\mathbb C}^{MN\times K}$ as the SSF channel matrix in the $t$th subframe.

\subsection{Pilot Assignment}\label{sec:system_PA}
At the beginning of a frame, the CU assigns $\tau_{\sf p}$ mutually orthogonal PSs, each of length $\tau_{\sf p}$, to the UEs.

Denote ${\bf x}_k=[x_{1k},\cdots,x_{\tau_{\sf p}k}]^{\sf T}$ as the pilot assignment vector for ${\sf UE}_k$, where $x_{gk}=1$ if the $g$th PS (denoted as ${\sf PS}_g$) is assigned to ${\sf UE}_k$, $x_{gk}=0$ otherwise. ${\bf X}_{\sf o}=[{\bf x}_1,\cdots,{\bf x}_{\tau_{\sf p}}]\in\{0,1\}^{\tau_{\sf p}\times K}$ is the pilot assignment matrix for $K$ UEs. ${\bf X}_{\sf o}^{\sf T}\cdot{\bf 1}_{\tau_{{\sf p}}\times 1}={\bf 1}_{K\times 1}$, since each UE is assigned only one PS.

\subsection{Channel Estimation and Beamforming} \label{sec:channel-estimation}
In the $t$th subframe, the received pilot signal at ${\sf AP}_m$ for estimating the channel of ${\sf UE}_k$ is
\begin{equation} \label{eq:received-pilot}
{{\bf{y}}_{mk}^{t}} = \sum\nolimits_{i=1}^K({\bf x}_k^{\sf T}{\bf x}_i)\sqrt{p^{\sf ul}\tau_{\sf p}}\sqrt{\beta_{mi}}\mathbf{h}_{mi}^{t}+{\bf n}_{mk}^{t},
\end{equation}
where ${\bf x}_k^{\sf T}{\bf x}_i=1$ if ${\sf UE}_k$ and ${\sf UE}_i$ are assigned with the same PS, ${{\bf x}_k^{\sf T}}{\bf x}_i=0$ otherwise, $p^{\sf ul}$ is the uplink transmit power of each UE, ${\bf n}_{mk}^{t}\sim{\mathcal {CN}}(0,\sigma_{{\sf AP},m}^2{\bf I}_N)$, and $\sigma_{{\sf AP},m}^2$ is the noise power  at ${\sf AP}_m$.

To reduce the overhead for exchanging channels among APs, we adopt the distributed beamforming discussed in \cite{foundation_CF_2021}, where each AP computes its beamforming vectors independently, using the channel vectors of its associated UEs (i.e., the local channels) and the known LSF channel gains of unassociated UEs. In particular, ${\sf AP}_m$ computes a beamforming vector for every associated UE, say ${\bf v}_{m_k}^{t}\in{\mathbb C}^{N\times 1}$ for ${\sf UE}_k$, $k\in{\mathcal A}_m$, where ${\mathcal A}_m=\{k|a_{mk}=1\}$ is a set containing the indices of UEs associated with ${\sf AP}_m$, and $\Vert{\bf v}_{m_k}^{t}\Vert=1$.

Consequently, only the local channels need to be estimated.
Using the minimum mean square error (MMSE) estimator in \cite{foundation_CF_2021}, the estimated channel vector from ${\sf AP}_m$ to ${\sf UE}_k$ in the $t$th subframe is given by
\begin{equation} \label{eq:estimated-channel}
\begin{aligned}
	{\hat {\bf{h}}_{mk}^{t}}\!\!\!=\!\! {\sqrt {{p^{\sf ul}}} }{\bf R}_{mk}\big(\!\sum\nolimits_{i=1}^K\!\!({{\bf x}_k}^{\!\!\sf T}{\bf x}_i)p^{\sf ul}{\bf R}_{mi}\!\!+\!\!\sigma_{{\sf AP},m}^2{\bf I}_N\!\big)^{\!\!-1}\!{\bf{y}}_{mk}^{t}.
\end{aligned}
\end{equation}

\subsection{Performance Metric}\label{sec:DL_transmission}
In the $t$th subframe, the SINR of ${\sf UE}_k$ is
\begin{equation} \label{eq:SINR}
	\gamma_{k}^{t}=\frac{\left|\sum_{m\in{\mathcal S}_k}\sqrt{p_{m_k}^{t}}g_{m_kk}^{t}\right|^2}{\sum_{i\neq k}\left|\sum_{m\in{\mathcal S}_i} \sqrt{p_{m_i}^{t}}g_{m_ik}^{t}\right|^2+\sigma_{{\sf UE},k}^2},
\end{equation}
where $g_{m_ik}^{t}\triangleq ({\beta}_{mk}{\bf h}_{mk}^{t})^{\rm H}{\bf v}_{m_i}^{t}$ is the equivalent channel coefficient from ${\sf AP}_m$ to ${\sf UE}_k$ after beamforming for ${\sf UE}_i$,\footnote{$g_{m_ik}^{t}$ can be regarded as the channel coefficient from the \emph{equivalent antenna} (called AN for short) at ${\sf AP}_m$ serving ${\sf UE}_i$ (denoted ${\sf AN}_{m_i}$) to ${\sf UE}_k$.} ${\mathcal S}_k=\{m|a_{mk}=1\}$ is a set containing the indices of APs serving ${\sf UE}_k$, $\sigma_{{\sf UE},k}^2$ is the noise power at ${\sf UE}_k$, and $p_{m_i}^{t}$ is the transmit data power from ${\sf AP}_m$ to ${\sf UE}_i$.

Denote the power allocation matrix by ${{\bf P}^t}=[p_{m_k}^t]_{M\times K}\in{\mathbb R}^{M\times K}$.
To account for the impact of pilot overhead, the net-SE in the $t$th subframe is defined as
\begin{equation} \label{eq:net-SE}
\begin{aligned}
	\eta^t&={\sf SE}({\tau_{\sf p}},{\bf X}_{\sf o},{\bf P}^t,{\bf A},{\boldsymbol{\beta}},{\bf H}^t)&
    \\&\triangleq\left(1-{\tau_{\sf p}}/{\tau_{\sf c}}\right)\sum\nolimits_{k=1}^K\log_2(1+\gamma_{k}^{t}),
    \end{aligned}
\end{equation}
where ${\sf SE}(\cdot)$ is a function.
The average net-SE in a frame is
 \begin{equation}\label{eq:Avg-net-SE}
 	{\bar \eta}=\frac{1}{N_{\sf T}}\sum\nolimits_{t=1}^{N_{\sf T}}\eta^t.
 \end{equation}

Since SSF channels in different subframes are i.i.d., we have $\bar \eta \approx {\mathbb E}_{{\bf H}^t}\{\eta^t\}$, where the expectation is taken over the SSF channels, and the approximation is accurate when $N_{\sf T}$ is sufficiently large.

\section{Joint Pilot Length, Pilot Assignment, and Power Allocation} \label{sec:Problem-Policy}
In this section, we formulate the joint optimization problems in DTS and STS. We proceed to present the corresponding policies and analyze their permutation properties. Finally, we show that the optimal pilot assignment policies are one-to-many mappings and discuss their impact.

\subsection{Problems Formulation for Joint Optimization}
We aim to maximize \(\mathbb{E}_{\mathbf{H}^t} \{\eta^t\}\) by jointly optimizing the pilot length \(\tau_{\mathsf{p}}\), the pilot assignment matrix \({\mathbf{X}}_{\sf o}\), and the power allocation matrix \(\mathbf{P}^t\).

Since \(\tau_{\sf p}\) itself is an optimization variable, the size of \({\mathbf{X}}_{\sf o} \in \{0,1\}^{\tau_{\sf p} \times K}\) is unknown a priori. To address the issue of size uncertainty, we introduce another pilot assignment matrix \(\mathbf{X}=[x_{gk}]_{K\times K} \in \{0,1\}^{K \times K}\), which is capable of representing all feasible pilot assignments. When $\tau_{\sf p} = K$, all UEs are assigned with orthogonal PSs, no matter if pilot contamination exists among them.

Both \(\tau_{\sf p}\) and \(\mathbf{X}_{\sf o}\) can be derived from \(\mathbf{X}\) as follows.

Each row of \(\mathbf{X}\) corresponds to a potential PS. When each UE is assigned exactly one PS, there are $K$ potential PSs. However, since the PSs will be reused for maximizing the net-SE, only a subset of these PSs is actually assigned while the rest of PSs are unused. Consequently, the rows of \(\mathbf{X}\) that are entirely zero indicate unused PSs. In the pilot contamination-limited scenario, the optimal pilot length equals the number of PSs actually assigned \cite{Peng2023}. Therefore, \(\tau_{\sf p}\) is determined by counting the number of non-all-zero rows in \(\mathbf{X}\), and \(\mathbf{X}_{\sf o}\) is constructed by removing the all-zero rows from \(\mathbf{X}\).

An example is illustrated in Fig. \ref{fig:pilot_assignment_matrix}. In this case, \(\mathsf{PS}_1\) is assigned to \(\mathsf{UE}_1\) and \(\mathsf{UE}_3\), \(\mathsf{PS}_2\) is assigned to \(\mathsf{UE}_2\), and \(\mathsf{PS}_3\) is unused. Thus, \(\tau_{\sf p} = 2\), and \(\mathbf{X}_{\sf o}\) is obtained by removing the third row of \(\mathbf{X}\).
\vspace{-2mm}
\begin{figure}[!htp]
	\centering
	\includegraphics[scale=0.53]{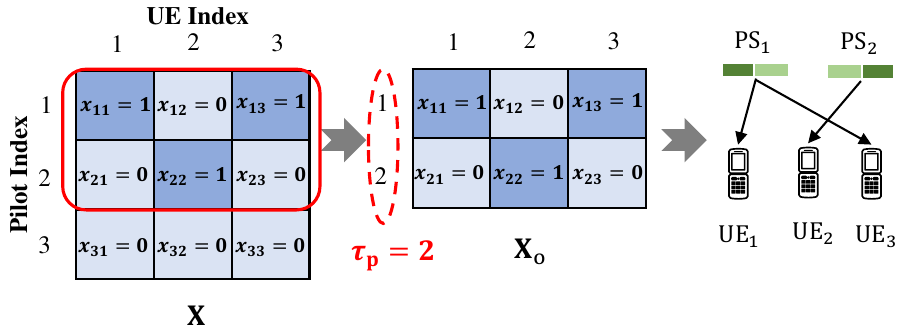}
	\vspace{-2mm}
	\caption{$\mathbf{X}$ with $K=3$ and the derived $\tau_{\sf p}$ and $\mathbf{X}_{\sf o}$.} \label{fig:pilot_assignment_matrix}
	\vspace{-4mm}
\end{figure}

The original joint optimization problem over three variables is thus equivalent to jointly optimizing \(\mathbf{X}\) and \(\mathbf{P}^t\) to maximize \(\mathbb{E}_{\mathbf{H}^t} \{\eta^t \mid \tau_{\mathsf{p}} = \psi(\mathbf{X}) \}\), where $\psi(\cdot)$ is a function for computing the number of assigned PSs. The expression of $\psi(\cdot)$ is
\vspace{-2mm}
\begin{equation} \label{eq:NumberPS}
	\psi(\mathbf{X}) = \sum_{g=1}^{K} \left(1 - \prod_{k=1}^K (1 - x_{gk}) \right),
\end{equation}
where \(\prod_{k=1}^K (1 - x_{gk}) = 1\) if the $g$th row of ${\bf X}$ is entirely zero (i.e., $\mathsf{PS}_g$ is not assigned to any UE), and \(\prod_{k=1}^K (1 - x_{gk}) = 0\) if $\mathsf{PS}_g$ is assigned to at least one UE.


\subsubsection{DTS problem} We formulate a DTS problem to jointly optimize the pilot assignment matrix in a frame and the power allocation matrix in each subframe, which is
\begin{subequations}\label{eq:problem}
	\begin{align}
		\begin{split} \label{eq:objective-function}
            \mathop {\sf max }\limits_{{\bf{X}},{\bf{P}}^t}&\mathbb{E}_{\mathbf{H}^t} \{\eta^t | \tau_{\mathsf{p}} = \psi(\mathbf{X}) \}
		\end{split}\\
		\begin{split} \label{eq:cons-1}
			s.t.~&{\bf X}^{\sf T}\cdot{\bf 1}_{K\times 1}={\bf 1}_{K\times 1},~{x_{gk}} \in \{ 0,1\},\forall g,k,
		\end{split}\\
		\begin{split}\label{eq:cons-2}
			~&({\bf A}\odot{\bf P}^t)\cdot{\bf 1}_{K\times 1}\preceq P_{\sf max}{\bf 1}_{K\times 1},~{\bf P}^t \succeq 0.
		\end{split}
	\end{align}
\end{subequations}
Constraint \eqref{eq:cons-1} ensures that each UE is assigned one PS, and constraint \eqref{eq:cons-2} restricts the total transmit power of each AP not exceeding $P_{\sf max}$.

The DTS problem can be solved by a bi-level optimization method \cite{2025MTS}. Specifically, the DTS problem can be equivalently transformed into the following two subproblems,
\begin{subequations} \label{eq:subproblems}
	\begin{align}
		\!&{\textbf{high-level subproblem:}} \nonumber\\
		\begin{split} \label{eq:high-level}
                \!& \mathop {\sf max }\limits_{{\bf{X}}}~ \mathbb{E}_{\mathbf{H}^t} \{\eta^t | \tau_{\mathsf{p}} = \psi(\mathbf{X}) \},
		   s.t.~\eqref{eq:cons-1},\mathbf{P}^t \!=\! P^*({\bf X}, {\mathbf A}, {\boldsymbol\beta},{\bf H}^t),
	    \end{split}\\
		\!&{\textbf {low-level subproblem:}}\nonumber\\
		\begin{split} \label{eq:low-level}
			\!& \mathop{\sf max }\limits_{P({\bf X}, {\mathbf A}, {\boldsymbol\beta},{\bf H}^t)}~\{\eta^t|{\bf X},\tau_{\mathsf{p}} = \psi(\mathbf{X})\}, ~s.t.~\eqref{eq:cons-2}.
	    \end{split}
	\end{align}
\end{subequations}

The low-level subproblem optimizes a power allocation function $\mathbf{P}^t = P(\mathbf{X}, {\mathbf A}, {\boldsymbol\beta},\mathbf{{H}}^t)$  in each subframe, given the pilot assignment. The high-level subproblem optimizes the pilot assignment matrix in a frame, given the optimal power allocation function $P^*(\mathbf{X}, {\mathbf A}, {\boldsymbol\beta}, \mathbf{{H}}^t)$.

\subsubsection{STS problem}
To strike a balance between the net-SE and inference complexity, we optimize both pilot assignment and power allocation in a frame, i.e.,
\begin{subequations}\label{eq:problem-LSF}
	\begin{align}
		\begin{split} \label{eq:objective-function-LSF}
			\mathop {\sf max }\limits_{{\bf{X}},{\bf{P}}} ~&{\mathbb E}_{{\bf H}^t}\{\eta^t|\tau_{\mathsf{p}} = \psi(\mathbf{X})\},
		\end{split}\\
		\begin{split}
			s.t.~&\eqref{eq:cons-1},
		\end{split}\label{eq:cons-LSF-1} \\
		\begin{split}
			~&({\bf A}\odot{\bf P})\cdot{\bf 1}_{K\times 1}\preceq P_{\sf max}{\bf 1}_{K\times 1},~{\bf P} \succeq 0
		\end{split}\label{eq:cons-LSF}
	\end{align}
\end{subequations}
where ${\bf P}\in{\mathbb R}^{M\times K}$ is the power allocation matrix in the frame.

The STS optimization suffers a net-SE loss due to its disregard of SSF.
In \cite{LSF_CSI_Power2025}, it is shown that the performance loss from large-timescale power allocation is slight due to channel hardening in massive MIMO systems. Nonetheless, only power allocation was optimized therein. Whether or not the net-SE loss from jointly optimizing pilot assignment and power allocation is also negligible remains unclear.

No numerical algorithms are available for these two problems in the literature. In fact, the DTS problem in \eqref{eq:subproblems} has never been addressed before. The STS problem in \eqref{eq:problem-LSF} is different from those in \cite{2024JPCPA-CF-Ren,2024JPCPA-CF-Khan,JPAPC2025}, where the pilot length is not optimized.

Both the DTS and STS problems involve continuous and integer variables, making numerical algorithms computationally intensive.
Hence, we resort to the learning-based approach, which finds the mapping from the environmental states to the optimal solution of a problem.

\subsection{Three Policies to be Learned}

From the DTS problem, we jointly learn a power allocation policy and a pilot assignment policy, which are respectively the mappings from the environmental states to the optimal power allocation matrix (denoted by ${\bf P}^{t}_{\sf DTS}$) and the optimal pilot assignment matrix (denoted by ${\bf X}_{\sf DTS}$). From the STS problem, we learn an STS policy, which is the mapping from the environmental states to the optimal solution of the problem in \eqref{eq:problem-LSF} (denoted by
$[{\bf X}_{\sf STS},{\bf P}_{\sf STS}]$). Next, we analyze the environmental states of the three policies.
\subsubsection{Power allocation policy}
From \eqref{eq:low-level}, \eqref{eq:net-SE}, and \eqref{eq:SINR}, we can see that the optimal power allocation ${\bf P}^{t}_{\sf DTS}$ depends on the equivalent channel matrix ${\bf G}^t$, where ${\bf G}^{t}=[({{\bf G}_{1}}^{t})^{\sf T},\cdots,\\({{\bf G}_{M}}^{t})^{\sf T}]\in{\mathbb C}^{K\times MK}$, and ${\bf G}_{m}^{t}=[{g}_{m_ik}^{t}]_{K\times K}\in{\mathbb C}^{K\times K}$.

${g}_{m_ik}^{t}=({{\sqrt{\beta_{mk}}}\bf h}_{mk}^{t})^{\rm H}{\bf v}_{m_i}^{t}$ can be computed using the estimated local channels and the LSF channel gains.
The equivalent channels from ${\sf AP}_m$ to its associated UEs are computed by using the estimated local channels from \eqref{eq:estimated-channel}. The equivalent channels from ${\sf AP}_m$ to its non-associated UEs are obtained by taking the expectation over the SSF channels, i.e.,
\begin{align}
&{\mathbb E}\{|({\sqrt{\beta_{mk}}{\bf h}}_{mk}^t)^{\rm H} {\bf v}_{m_i}^t|\}&\nonumber\\
=&\sqrt{\!\frac{1}{N}{\sf tr}({\mathbb E}\{(\sqrt{\beta_{mk}}{\bf h}_{mk}^t)(\sqrt{\beta_{mk}}{{\bf h}}_{mk}^t)^{\rm H}\!\}){\mathbb E}\{\!({\bf v}_{m_i}^t)^{\rm H}{\bf v}_{m_i}^t\!\}}\nonumber\\
=&\sqrt{\beta_{mk}}.
\end{align}
Hence, the estimated equivalent channel coefficient is
\begin{equation}\label{eq:estimated-equivalent-channel}
	{\hat g}_{m_ik}^t=
	\begin{cases}
		({\hat{\bf h}}_{mk}^t)^{\rm H} {\bf v}_{m_i}^t,&~\forall i,k\in{\mathcal A}_m\\
		\sqrt{\beta_{mk}}, &~\forall i\in{\mathcal A}_m,k\notin{\mathcal A}_m\\
		0,&~\forall i\notin{\mathcal A}_m
	\end{cases}.
\end{equation}

Then, the power allocation policy is
\begin{align}\label{eq:Power_policy}
{{\bf P}_{\sf DTS}^t}=f_{\sf DTS}^{\sf power}({\hat{\bf G}^t}).
\end{align}

\subsubsection{Pilot assignment policy}
The optimal pilot assignment only depends on the LSF channel gain matrix and the association matrix.
Hence, the pilot assignment policy is
\begin{align}\label{eq:Pilot_policy}
{\bf X}_{\sf DTS}=f_{\sf DTS}^{\sf pilot}({\boldsymbol\beta},{\bf A}).
\end{align}

\subsubsection{STS policy}
  From the problem in \eqref{eq:problem-LSF}, we know that both pilot assignment and power allocation depend on $\boldsymbol\beta$ and $\bf A$. Hence, the STS policy is
\begin{align}\label{eq:STS_policy}
[{\bf X}_{\sf STS},{\bf P}_{\sf STS}]=f_{\sf STS}({\boldsymbol\beta},{\bf A}),
\end{align}
which contains two policies, denoted by ${\bf X}_{\sf STS}=f_{\sf STS}^{\sf pilot}({\boldsymbol\beta},{\bf A})$ and ${\bf P}_{\sf STS}=f_{\sf STS}^{\sf power}({\boldsymbol\beta},{\bf A})$, respectively.

\vspace{-1mm}\subsection{Permutation Properties}\label{sec:PE}

Using the method in \cite{LSJ2023MGNN}, we can analyze permutation properties of the DTS and STS policies by first identifying the sets involved in the corresponding policies.

\subsubsection{Power allocation policy}\label{sec:policy_two}
From the problem in \eqref{eq:low-level}, we can identify two sets: the UE- and AN-sets.
The UE-set is a normal set whose elements can be reordered arbitrarily (i.e., reordering these elements does not change the optimization problem).
The AN-set is a nested set consisting of $M$ subsets due to the power constraint at each AP, i.e.,
\begin{align}
\{\{{\sf ANs~at}~{\sf AP}_1\},\cdots,\{{\sf ANs~ at}~ {\sf AP}_M\}\}.
\end{align}
The subsets can be reordered arbitrarily as a whole.
In each subset, the ANs can be reordered together with their served UEs.
The permutation matrix for reordering the ANs is \cite{LSJ2023MGNN}
\begin{align}
{\bf\Omega}_{\sf AN}=({\bf I}_M\otimes {\bf\Pi}_{\sf UE})\cdot({\bf \Pi}_{\sf AP}\otimes {\bf I}_{K})\in\{0,1\}^{MK\times MK},
\end{align}
where ${\bf \Pi}_{\sf UE}\in\{0,1\}^{K\times K}$ and ${\bf \Pi}_{\sf AP}\in\{0,1\}^{M\times M}$ respectively represent the reordering of the UEs and APs (i.e., subsets, since a subset corresponds to an AP). For example, when $M=2$, $K=2$, and ${\bf\Pi}_{\sf AP}={\bf\Pi}_{\sf UE}=\left( \begin{smallmatrix}
	0 & 1 \\
	1 & 0
\end{smallmatrix} \right)$ (which represents swapping ${\sf AP}_1$ and ${\sf AP}_2$, and ${\sf UE}_1$ and ${\sf UE}_2$), the ANs are reordered from $\{\{{\sf AN}_{1_1},{\sf AN}_{1_2}\},\{{\sf AN}_{2_1},{\sf AN}_{2_2}\}\}$ to $\{\{{\sf AN}_{2_2},{\sf AN}_{2_1}\},\{{\sf AN}_{1_2},{\sf AN}_{1_1}\}\}$.

By vectorizing the  power allocation matrix ${\bf P}_{\sf DTS}^t$ as ${\vec{\bf P}}_{\sf DTS}^t\!\!=\!\![p_{{\sf DTS},11}^t,\cdots\!,p_{{\sf DTS},1K}^t,\cdots\!,p_{{\sf DTS},M1}^t,\cdots\!,p_{{\sf DTS},MK}^t]\in\!\!{\mathbb R}^{MK}$, we can show that
${\vec{\bf P}}_{\sf DTS}^t$ and ${\hat{\bf G}^t}$ are respectively permuted as ${\vec{\bf P}}_{\sf DTS}^t{\bf\Omega}_{\sf AN}$ and ${\bf\Pi}_{\sf UE}^{\sf T}{\hat{\bf G}^t}{\bf\Omega}_{\sf AN}$ when UEs and ANs are reordered separately. Hence, the power allocation policy in \eqref{eq:Power_policy} satisfies
\begin{equation}\label{eq:PE-DTS-power}
{\vec{\bf P}}_{\sf DTS}^t{\bf\Omega}_{\sf AN}=f_{\sf DTS}^{\sf power}({\bf\Pi}_{\sf UE}^{\sf T}{\hat{\bf G}^t}{\bf\Omega}_{\sf AN}).
\end{equation}

\subsubsection{Pilot assignment policy}
From the problem in \eqref{eq:high-level}, we can identify three normal sets: the PS-, AP-, and UE-sets. These sets are independent of each other, i.e., the elements in each set can be reordered no matter how the elements in the other two sets are reordered.

When the PSs, APs, and UEs are reordered separately,
${\bf X}_{\sf DTS}$, ${\boldsymbol\beta}$, and ${\bf A}$ are permuted as ${\bf \Pi}_{\sf PS}^{\sf T}{\bf X}_{\sf DTS}{\bf \Pi}_{\sf UE}$, ${\bf \Pi}_{\sf AP}^{\rm T}{\boldsymbol\beta}{\bf \Pi}_{\sf UE}$, and ${\bf \Pi}_{\sf AP}^{\rm T}{\bf A}{\bf \Pi}_{\sf UE}$, respectively, where ${\bf\Pi}_{\sf PS}\in\{0,1\}^{K\times K}$. Hence, the pilot assignment policy in \eqref{eq:Pilot_policy} satisfies
\begin{equation}\label{eq:PE-DTS-pilot}
{\bf \Pi}_{\sf PS}^{\sf T}{\bf X}_{\sf DTS}{\bf \Pi}_{\sf UE}=f_{\sf DTS}^{\sf pilot}({\bf \Pi}_{\sf AP}^{\rm T}{\boldsymbol\beta}{\bf \Pi}_{\sf UE},{\bf \Pi}_{\sf AP}^{\rm T}{\bf A}{\bf \Pi}_{\sf UE}).
\end{equation}


\subsubsection{STS policy}
Similarly, we can identify three normal sets from the problem in \eqref{eq:problem-LSF}: the PS-, AP-, and UE-sets, and these sets are independent of each other, the same as the pilot assignment policy.
Thereby, the STS policy in \eqref{eq:STS_policy} satisfies
\begin{equation}\label{eq:PE-STS}
\begin{aligned}
[{\bf\Pi}_{\sf PS}^{\sf T}{\bf X}_{\sf STS}{\bf\Pi}_{\sf UE},{\bf\Pi}_{\sf AP}^{\sf T}{\bf P}_{\sf STS}{\bf\Pi}_{\sf UE}]=\\f_{\sf STS}({\bf\Pi}_{\sf AP}^{\sf T}{\boldsymbol\beta}{\bf\Pi}_{\sf UE},{\bf\Pi}_{\sf AP}^{\sf T}{\bf A}{\bf\Pi}_{\sf UE}).
\end{aligned}
\end{equation}

\vspace{-1mm}\subsection{One-to-many Mappings and Their Impact}\label{sec:Mapping}
The pilot assignment policies obtained from both DTS and STS problems can be expressed in a unified form as
\begin{equation}
{\bf X}^*=f^{\sf pilot}({\boldsymbol\beta},{\bf A}),
\end{equation}
where ${\bf X}^*$ is an optimal pilot assignment matrix.

Next, we show that the pilot assignment policies from both problems are one-to-many mappings.

\begin{Proposition}\label{Prop:proposition 1}
${\bf \Pi}_{\sf PS}^{\sf T}{\bf X}^*$ is also the optimal pilot assignment matrix, i.e.,
\begin{equation}\label{eq:proposition 1-1}
{\bf \Pi}_{\sf PS}^{\sf T}{\bf X}^*\!=\!f^{\sf pilot}({\boldsymbol\beta},{\bf A}).
\end{equation}
When $\beta_{mi}=\beta_{mj}$ and $a_{mi}=a_{mj}$, $i\neq j$, $\forall m$, ${\bf X}^*{\bf \Pi}^{\sf S}_{\sf UE}$ is also the optimal pilot assignment matrix, i.e.,
\begin{equation}\label{eq:proposition 1-2} {\bf X}^*{\bf\Pi}^{\sf S}_{\sf UE}=f^{\sf pilot}({\boldsymbol\beta}, {\bf A}),
\end{equation}
where ${\bf\Pi}^{\sf S}_{\sf UE}\!\!=\!\!{\bf I}_K$ or  ${\bf \Pi}^{\sf S}_{\sf UE}\!\!=\!\![{\bf e}_{11},\cdots,{\bf e}_{ij},\cdots,{\bf e}_{ji},\cdots,{\bf e}_{KK}]$ are two specific values of ${\bf\Pi}_{\sf UE}$, and ${\bf e}_{ij}$ is the $i$th column vector of ${\bf \Pi}^{\sf S}_{\sf UE}$ and is a standard basis vector with one at the $j$th entry and zeros elsewhere.
\end{Proposition}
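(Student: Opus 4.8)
The plan is to obtain both identities by specializing the permutation equivariances already established in Section~\ref{sec:PE} --- namely \eqref{eq:PE-DTS-pilot} for the DTS pilot policy and the pilot block of \eqref{eq:PE-STS} for the STS policy --- to particular choices of the permutation matrices, and, where useful, to cross-check feasibility and objective invariance directly from \eqref{eq:received-pilot}--\eqref{eq:net-SE}. The conceptual point I would stress is that in both parts the \emph{input} $({\boldsymbol\beta},{\bf A})$ of $f^{\sf pilot}$ is left unchanged by the chosen permutation, so the equivariance produces a second optimal matrix that differs from ${\bf X}^*$ only by that permutation; this is precisely what makes $f^{\sf pilot}$ one-to-many.

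For \eqref{eq:proposition 1-1} I would invoke the independence of the PS-set from the AP- and UE-sets noted in Section~\ref{sec:PE}, setting ${\bf\Pi}_{\sf AP}={\bf I}_M$ and ${\bf\Pi}_{\sf UE}={\bf I}_K$ in \eqref{eq:PE-DTS-pilot} (resp.\ in \eqref{eq:PE-STS}) while keeping ${\bf\Pi}_{\sf PS}$ arbitrary. Then ${\bf\Pi}_{\sf AP}^{\sf T}{\boldsymbol\beta}{\bf\Pi}_{\sf UE}={\boldsymbol\beta}$ and ${\bf\Pi}_{\sf AP}^{\sf T}{\bf A}{\bf\Pi}_{\sf UE}={\bf A}$, so the right-hand side is $f^{\sf pilot}({\boldsymbol\beta},{\bf A})$ while the left-hand side becomes ${\bf\Pi}_{\sf PS}^{\sf T}{\bf X}^*$, which gives \eqref{eq:proposition 1-1}. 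As a self-contained check one can verify that ${\bf\Pi}_{\sf PS}^{\sf T}{\bf X}^*$ satisfies \eqref{eq:cons-1} because ${\bf\Pi}_{\sf PS}{\bf 1}_{K\times1}={\bf 1}_{K\times1}$, that $\psi({\bf\Pi}_{\sf PS}^{\sf T}{\bf X}^*)=\psi({\bf X}^*)$ since permuting rows cannot change the number of nonzero rows, and that $({\bf\Pi}_{\sf PS}^{\sf T}{\bf x}_k)^{\sf T}({\bf\Pi}_{\sf PS}^{\sf T}{\bf x}_i)={\bf x}_k^{\sf T}{\bf x}_i$, so the received pilots \eqref{eq:received-pilot}, the MMSE estimates \eqref{eq:estimated-channel}, the beamformers, the SINRs \eqref{eq:SINR}, and hence $\eta^t$ are unchanged for every SSF realization and every power allocation; thus ${\bf\Pi}_{\sf PS}^{\sf T}{\bf X}^*$ attains the same optimum.

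For \eqref{eq:proposition 1-2} the case ${\bf\Pi}^{\sf S}_{\sf UE}={\bf I}_K$ is immediate, so I would focus on the transposition of columns $i$ and $j$. The first step is to observe that the hypothesis $\beta_{mi}=\beta_{mj}$ and $a_{mi}=a_{mj}$ for all $m$ means interchanging columns $i$ and $j$ leaves ${\boldsymbol\beta}$ and ${\bf A}$ unchanged, i.e.\ ${\boldsymbol\beta}{\bf\Pi}^{\sf S}_{\sf UE}={\boldsymbol\beta}$ and ${\bf A}{\bf\Pi}^{\sf S}_{\sf UE}={\bf A}$. Then setting ${\bf\Pi}_{\sf AP}={\bf I}_M$, ${\bf\Pi}_{\sf PS}={\bf I}_K$, and ${\bf\Pi}_{\sf UE}={\bf\Pi}^{\sf S}_{\sf UE}$ in \eqref{eq:PE-DTS-pilot} (resp.\ \eqref{eq:PE-STS}) yields ${\bf X}^*{\bf\Pi}^{\sf S}_{\sf UE}=f^{\sf pilot}({\boldsymbol\beta}{\bf\Pi}^{\sf S}_{\sf UE},{\bf A}{\bf\Pi}^{\sf S}_{\sf UE})=f^{\sf pilot}({\boldsymbol\beta},{\bf A})$, which is \eqref{eq:proposition 1-2}; feasibility of ${\bf X}^*{\bf\Pi}^{\sf S}_{\sf UE}$ follows again from $({\bf\Pi}^{\sf S}_{\sf UE})^{\sf T}{\bf 1}_{K\times1}={\bf 1}_{K\times1}$.

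I expect the only genuine subtlety to lie in \eqref{eq:proposition 1-2}: one must be sure that swapping two UEs with identical large-scale fading and association leaves the \emph{entire} high-level subproblem invariant --- not merely ${\boldsymbol\beta}$ and ${\bf A}$, but also the inner optimal power allocation function and the expectation over the SSF channels. This is exactly where the i.i.d.\ assumption on the SSF across UEs and the equivariance of the optimal power allocation policy \eqref{eq:PE-DTS-power} enter, and it is already folded into the derivation of the permutation properties in Section~\ref{sec:PE}; once those are invoked, the remainder is the short specialization above. The PS-relabeling claim \eqref{eq:proposition 1-1} is comparatively routine, since it only uses that the PS-set is a normal set independent of the other two.
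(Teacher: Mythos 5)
Your proposal is correct and follows essentially the same route as the paper: both specialize the equivariance relations \eqref{eq:PE-DTS-pilot} and \eqref{eq:PE-STS} with ${\bf\Pi}_{\sf AP}={\bf I}_M$, ${\bf\Pi}_{\sf UE}={\bf I}_K$ (arbitrary ${\bf\Pi}_{\sf PS}$) for \eqref{eq:proposition 1-1}, and with ${\bf\Pi}_{\sf PS}={\bf I}_K$, ${\bf\Pi}_{\sf UE}={\bf\Pi}^{\sf S}_{\sf UE}$ together with the invariance ${\boldsymbol\beta}{\bf\Pi}^{\sf S}_{\sf UE}={\boldsymbol\beta}$, ${\bf A}{\bf\Pi}^{\sf S}_{\sf UE}={\bf A}$ under the stated hypothesis for \eqref{eq:proposition 1-2}. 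Your additional feasibility and objective-invariance checks are a harmless cross-verification beyond what the paper includes.
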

\begin{proof}
From \eqref{eq:PE-DTS-pilot} and \eqref{eq:PE-STS}, we have ${\bf\Pi}_{\sf PS}^{\sf T}{\bf X}^*{\bf\Pi}_{\sf UE}=f^{\sf pilot}({\bf\Pi}_{\sf AP}^{\sf T}{\boldsymbol\beta}{\bf\Pi}_{\sf UE},{\bf\Pi}_{\sf AP}^{\sf T}{\bf A}{\bf\Pi}_{\sf UE})$.

When ${\bf\Pi}_{\sf AP}={\bf I}_{M}$ and ${\bf\Pi}_{\sf UE}={\bf I}_{K}$, we obtain \eqref{eq:proposition 1-1}.

When ${\bf\Pi}_{\sf PS}={\bf I}_{{K}}$, ${\bf\Pi}_{\sf AP}={\bf I}_{M}$, and ${\bf\Pi}_{\sf UE}={\bf \Pi}^{\sf S}_{\sf UE}$, we obtain ${\bf X}^*{\bf \Pi}^{\sf S}_{\sf UE}=f^{\sf pilot}({\boldsymbol\beta{\bf \Pi}^{\sf S}_{\sf UE}},{\bf A}{\bf \Pi}^{\sf S}_{\sf UE})$.
If ${\bf \Pi}^{\sf S}_{\sf UE}={\bf I}_K$, \eqref{eq:proposition 1-2} holds.
If ${\bf \Pi}^{\sf S}_{\sf UE}=[{\bf e}_{11},\cdots,{\bf e}_{ij},\cdots,{\bf e}_{ji},\cdots,{\bf e}_{KK}]$, ${\boldsymbol\beta}{\bf \Pi}^{\sf S}_{\sf UE}$ and ${\bf A}{\bf \Pi}^{\sf S}_{\sf UE}$ are the matrices obtained after swapping the $i$th and $j$th columns of ${\boldsymbol\beta}$ and ${\bf A}$, respectively.
Under the conditions that $\beta_{mi}=\beta_{mj}$ and $a_{mi}=a_{mj}$, $\forall m$, the $i$th and the $j$th columns of $\boldsymbol\beta$ and ${\bf A}$ are the same, hence ${\boldsymbol\beta}={\boldsymbol\beta}{\bf \Pi}^{\sf S}_{\sf UE}$ and ${\bf A}={\bf A}{\bf \Pi}^{\sf S}_{\sf UE}$. Then, \eqref{eq:proposition 1-2} holds.
When ${\bf \Pi}^{\sf S}_{\sf UE}\neq{\bf I}_K$ and ${\bf \Pi}^{\sf S}_{\sf UE}\neq[{\bf e}_{11},\cdots,{\bf e}_{ij},\cdots,{\bf e}_{ji},\cdots,{\bf e}_{KK}]$, since ${\boldsymbol\beta}\neq{\boldsymbol\beta}{\bf \Pi}^{\sf S}_{\sf UE}$ and ${\bf A}\neq{\bf A}{\bf \Pi}^{\sf S}_{\sf UE}$, we have ${\bf X}^*{\bf \Pi}^{\sf S}_{\sf UE}=f^{\sf pilot}({\boldsymbol\beta}{\bf \Pi}^{\sf S}_{\sf UE},{\bf A}{\bf \Pi}^{\sf S}_{\sf UE})\neq f^{\sf pilot}({\boldsymbol\beta},{\bf A})$. Hence, only for the two specific values of ${\bf\Pi}_{\sf UE}$, \eqref{eq:proposition 1-2} holds.
\end{proof}

There exists $K!$ possible values of ${\bf \Pi}_{\sf PS}$. Hence, \eqref{eq:proposition 1-1} indicates that $f^{\sf pilot}(\cdot)$ is a one-to-many  mapping. Since there are two possible values of ${\bf \Pi}^{\sf S}_{\sf UE}$, \eqref{eq:proposition 1-2} also shows that $f^{\sf pilot}(\cdot)$ is a one-to-many mapping.

To help understand, we use the same example as in Fig. \ref{fig:pilot_assignment_matrix} with $K=3$ to illustrate the multiple optimal pilot assignment matrices for the same ${\boldsymbol\beta}$ and ${\bf A}$.

As shown in Fig. \ref{fig:permutation}(a), after permuting ${\bf X}^*$ along the PS dimension (i.e., reordering the PSs) by ${\bf \Pi}_{\sf PS}^{\sf T}{\bf X}^*$, the UEs that were originally assigned ${\sf PS}_1$ (i.e., ${\sf UE}_1$ and ${\sf UE}_3$) are now assigned ${\sf PS}_2$, and the UEs that were assigned ${\sf PS}_2$ (i.e., ${\sf UE}_2$) are assigned ${\sf PS}_1$.

When ${\bf \Pi}^{\sf S}_{\sf UE}=[{\bf e}_{11},\cdots,{\bf e}_{ij},\cdots,{\bf e}_{ji},\cdots,{\bf e}_{KK}]$, ${\bf X}^*{\bf \Pi}^{\sf S}_{\sf UE}$ is a matrix obtained by permuting ${\bf X}^*$ along the UE dimension (i.e., swapping the PSs assigned to ${\sf UE}_i$ and ${\sf UE}_j$). As shown in Fig. \ref{fig:permutation}(b), if $i=2$ and $j=3$, after the permutation, ${\sf UE}_2$ and ${\sf UE}_3$, which were originally assigned ${\sf PS}_2$ and ${\sf PS}_1$, respectively, are now respectively assigned ${\sf PS}_1$ and ${\sf PS}_2$.

\vspace{-2mm}
\begin{figure}[!htp]
	\centering
	\includegraphics[scale=0.6]{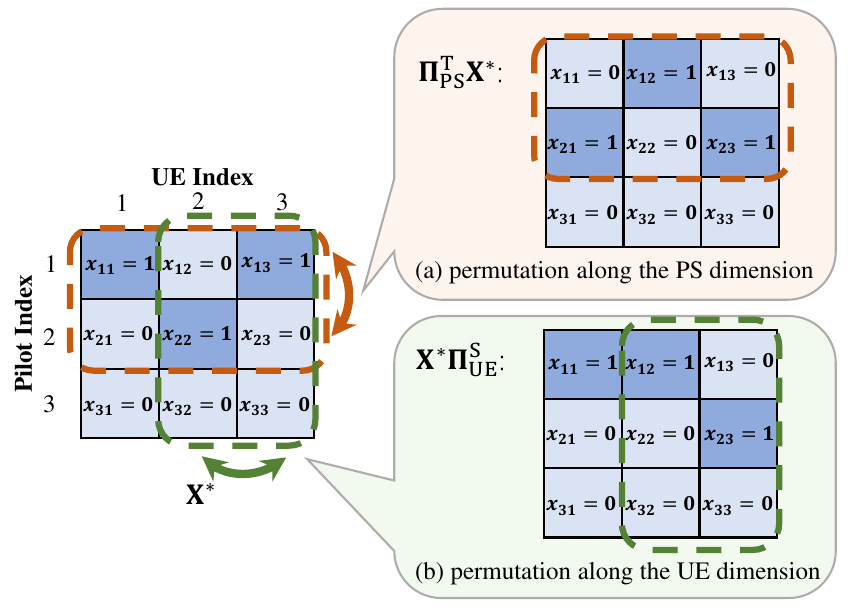}
	\vspace{-2mm}
	\caption{Three optimal pilot
assignment matrices for same ${\boldsymbol\beta}$ and ${\bf A}$.} \label{fig:permutation}
	\vspace{-2mm}
\end{figure}

In what follows, we analyze the challenges of learning the one-to-many mappings using GNNs. As a kind of DNN, GNNs learn functions, which are one-to-one mappings or many-to-one mappings. If the pilot assignment policy learned by a GNN is $\hat{\bf X}=\hat{f}^{\sf pilot}({\boldsymbol\beta},{\bf A})$ and the GNN satisfies the same permutation property as the pilot assignment policy (hence $\hat{f}^{\sf pilot}(\cdot)$ also satisfies \eqref{eq:proposition 1-1} and \eqref{eq:proposition 1-2}), then the following proposition indicates that $\hat{\bf X}$ is far from optimal.

\begin{Proposition}\label{Prop:proposition 2}
If $\hat{\bf X}=\hat{f}^{\sf pilot}({\boldsymbol\beta},{\bf A})$ satisfies \eqref{eq:proposition 1-1} and \eqref{eq:proposition 1-2}, then the output pilot assignment of the GNN is
\begin{subequations}
\begin{align}\label{eq:proposition 2-1}
&{\hat x}_{1k}=\cdots={\hat x}_{Kk}, ~\forall k,\\\label{eq:proposition 2-2}
&\hat{\bf x}_{i}=\hat{\bf x}_{j},~{\text{if}}~\beta_{mi}=
\beta_{mj},a_{mi}=a_{mj}, \forall m
\end{align}
\end{subequations}
\end{Proposition}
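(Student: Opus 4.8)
The plan is to exploit the fact that a GNN computes a genuine \emph{function} $\hat f^{\sf pilot}(\cdot)$, so that whenever $\hat f^{\sf pilot}$ satisfies the permutation equivariance \eqref{eq:proposition 1-1}, the left-hand side $\mathbf\Pi_{\sf PS}^{\sf T}\hat{\bf X}$ must be \emph{the same matrix} for every choice of $\mathbf\Pi_{\sf PS}$. Concretely, I would write out \eqref{eq:proposition 1-1} with $\hat{\bf X}$ in place of ${\bf X}^*$: for all permutation matrices $\mathbf\Pi_{\sf PS}\in\{0,1\}^{K\times K}$, $\mathbf\Pi_{\sf PS}^{\sf T}\hat{\bf X}=\hat f^{\sf pilot}(\boldsymbol\beta,{\bf A})=\hat{\bf X}$. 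So $\hat{\bf X}$ is invariant under left-multiplication by \emph{any} permutation matrix. Since permuting the rows of $\hat{\bf X}$ never changes $\hat{\bf X}$, every column of $\hat{\bf X}$ must have all entries equal; equivalently, for each fixed $k$, $\hat x_{1k}=\hat x_{2k}=\cdots=\hat x_{Kk}$, which is exactly \eqref{eq:proposition 2-1}. A clean way to phrase this: if two coordinates $g,g'$ of column $k$ differed, the transposition swapping rows $g$ and $g'$ would produce a matrix different from $\hat{\bf X}$, contradicting invariance.

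For \eqref{eq:proposition 2-2}, I would argue symmetrically on the UE (column) side using \eqref{eq:proposition 1-2}. Fix $i\neq j$ with $\beta_{mi}=\beta_{mj}$ and $a_{mi}=a_{mj}$ for all $m$. Proposition \ref{Prop:proposition 1} tells us that for these $\boldsymbol\beta,{\bf A}$ the value $\mathbf\Pi_{\sf UE}^{\sf S}$ equal to the transposition of columns $i,j$ is an admissible choice, and \eqref{eq:proposition 1-2} applied to $\hat f^{\sf pilot}$ gives $\hat{\bf X}\,\mathbf\Pi_{\sf UE}^{\sf S}=\hat f^{\sf pilot}(\boldsymbol\beta,{\bf A})=\hat{\bf X}$. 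Right-multiplying $\hat{\bf X}$ by this transposition swaps columns $i$ and $j$; invariance then forces column $i$ to equal column $j$, i.e. $\hat{\bf x}_i=\hat{\bf x}_j$, which is \eqref{eq:proposition 2-2}.

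I expect the only subtle point — the part worth stating carefully rather than the routine linear-algebra manipulation — is justifying that \eqref{eq:proposition 1-1} and \eqref{eq:proposition 1-2} may legitimately be invoked with $\hat f^{\sf pilot}$ in place of $f^{\sf pilot}$. This is granted by the hypothesis of the proposition (the GNN is designed to satisfy the same permutation property as the true policy, so $\hat f^{\sf pilot}$ inherits \eqref{eq:PE-DTS-pilot}/\eqref{eq:PE-STS}, and Proposition \ref{Prop:proposition 1} was derived purely from those equivariance relations), so I would simply note that the derivation of Proposition \ref{Prop:proposition 1} used only the equivariance and nothing specific to optimality, hence it transfers verbatim. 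The remaining steps are then immediate: "invariant under all row permutations $\Rightarrow$ constant columns" and "invariant under the $(i,j)$ column transposition $\Rightarrow$ columns $i,j$ coincide." No display environment here needs more than a single line, so there is no real obstacle beyond bookkeeping; the conceptual content is entirely the observation that a function forced to be invariant (not merely equivariant) under a large permutation group must collapse to a degenerate output.
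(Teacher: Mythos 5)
Your proposal is correct and follows essentially the same route as the paper's own proof: since $\hat f^{\sf pilot}$ is a function, \eqref{eq:proposition 1-1} forces $\mathbf\Pi_{\sf PS}^{\sf T}\hat{\bf X}=\hat{\bf X}$ for all row permutations (hence constant columns), and \eqref{eq:proposition 1-2} forces $\hat{\bf X}\mathbf\Pi_{\sf UE}^{\sf S}=\hat{\bf X}$ for the $(i,j)$ column transposition (hence $\hat{\bf x}_i=\hat{\bf x}_j$). The hypothesis-transfer point you flag is indeed just the stated assumption of the proposition, so no further justification is needed.
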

\begin{proof}
If ${\hat f}^{\sf pilot}(\cdot)$ satisfies \eqref{eq:proposition 1-1}, then ${\bf\Pi}_{\sf PS}^{\sf T}{\bf\hat X} = {\hat f}^{\sf pilot}({\boldsymbol\beta},{\bf A})$, $\forall {\bf\Pi}_{\sf PS}$. Since ${\hat f}^{\sf pilot}(\cdot)$ is a function (i.e., one pair of $({\boldsymbol\beta},{\bf A})$ can only map into a single output),
we have ${\bf\Pi}_{\sf PS}^{\sf T}{\bf\hat X}={\bf\hat X}$, from which we know that the elements in each column of ${\bf\hat X}$ are
identical, i.e., ${\hat x}_{1k} =\cdots= {\hat x}_{Kk}, \forall k$.

If ${\hat f}^{\sf pilot}(\cdot)$ satisfies \eqref{eq:proposition 1-2}, then ${\bf\hat X}{\bf \Pi}^{\sf S}_{\sf UE}={\hat f}^{\sf pilot}({\boldsymbol\beta},{\bf A})$ when $\beta_{mi}=\beta_{mj}$ and $a_{mi}=a_{mj}$, $\forall m$. Again, since ${\hat f}^{\sf pilot}(\cdot)$ is a function, we have ${\bf\hat X}{\bf \Pi}^{\sf S}_{\sf UE} = {\bf\hat X}$. When ${\bf \Pi}^{\sf S}_{\sf UE}\!\!=\!\![{\bf e}_{11},\cdots,{\bf e}_{ij},\cdots,{\bf e}_{ji},\cdots,{\bf e}_{KK}]$, ${\bf\hat X}{\bf \Pi}^{\sf S}_{\sf UE}$ is a matrix obtained after swapping the $i$th and $j$th columns of ${\bf\hat X}$. Hence, the $i$th and $j$th columns of ${\bf\hat X}$ are identical, i.e.,  ${\bf\hat x}_i = {\bf\hat x}_j$.
\end{proof}


\eqref{eq:proposition 2-1} means that ${\hat x}_{gk}=1$, $\forall g, k$, considering the constraint in \eqref{eq:cons-1}. Then, all PSs are assigned to each UE. However, this assignment is infeasible because each UE can be assigned only one PS, as constrained by \eqref{eq:cons-1}.

\eqref{eq:proposition 2-2} means that one PS will be assigned to multiple UEs associated with the same AP if they have identical LSF gains. However, different PSs will be assigned to such UEs by the optimal pilot assignment to avoid severe mutual
pilot contamination. In practice, it is unlikely for two UEs to have exactly the same LSF channel gain. Nonetheless, since a GNN produces continuous functions, $\hat{\bf x}_{i}\approx\hat{\bf x}_{j}$ if $\beta_{mi}\approx\beta_{mj}$ and $a_{mi}=a_{mj}$ \cite{2025SFSA-LSJ}. The possibility of $\hat{\bf x}_{i}=\hat{\bf x}_{j}$ increases given that the pilot assignment yielded by the GNN needs to be discretized to 0 or 1.

A similar issue to \eqref{eq:proposition 2-2} was found in the GNN-based user scheduling in \cite{2025SFSA-LSJ}, where two UEs are either both scheduled or both unscheduled if their channels are identical. Since the probability of similar SSF is much lower than that of similar LSF gains, this issue has a large impact on the scheduling performance only when the UEs are densely located. Yet the issue in \eqref{eq:proposition 2-1} does not occur in \cite{2025SFSA-LSJ}, which will incur substantial performance degradation if the GNN is not well-designed for learning pilot assignment.

\section{Designing GNNs}\label{sec:GNN}
In this section, we design GNNs to optimize the power allocation and pilot assignment. To reduce the training complexity and enable size generalizability, we first recap a method to design a GNN for satisfying the permutation property of a policy (called the desired permutation property). Then, we design three GNNs for learning the power allocation, pilot assignment, and STS policies, which are referred to as DTS-Power-GNN, DTS-Pilot-GNN, and STS-GNN, respectively. To address the challenge faced by the GNNs for pilot assignment, we resort to feature enhancement. To improve learning performance, we design an attention mechanism for reflecting potential pilot contamination. Finally, we introduce the training and testing procedures.

\vspace{-2mm} \subsection{Designing a GNN with Desired Permutation Property}\label{sec:KeyPoints}
A graph consists of vertices and edges, each of which may be associated with features or actions. A GNN learns a policy by mapping features to actions over a graph through multiple update layers. To ensure that the GNN satisfies the desired permutation property, the graph needs to be constructed properly, and the update equation in each layer should be carefully designed  \cite{LSJ2023MGNN,2025ZJYDesignGNN}.

Since the permutation property of a policy is induced by sets and the functions learned by a GNN are induced by permutable vertices, the elements in each set are defined as one type of vertices \cite{LSJ2023MGNN}. The edges, along with features and actions of vertices or edges, can be determined by examining the input and output dimensions of the policy. To reflect the permutability across related sets, the permutable edges are defined as the same type of edge \cite{2025ZJYDesignGNN}.

After the graph is constructed, either a vertex-GNN or an edge-GNN can be used, which either updates the representations of vertices or updates the representations of edges. Since the actions of the graphs to be constructed for the three policies are defined on edges, we consider edge-GNNs.
According to the proof in \cite{2025ZJYDesignGNN}, an edge-GNN learning over a graph constructed properly will exhibit the desired permutation properties if the following three conditions are satisfied.
\begin{enumerate}[label=(\roman*), leftmargin=*,labelsep=0.1em, align=left, nosep]
    \item  The processing functions for the edges to aggregate information are identical when the edges and their neighbors are of the same types.
    \item The combination functions are identical for the edges of the same type.
    \item The pooling functions satisfy the commutative law and are identical for the edges of the same type.
\end{enumerate}

In the following subsections, we construct the graphs and design the GNNs to satisfy the permutation properties in \eqref{eq:PE-DTS-power}, \eqref{eq:PE-DTS-pilot}, and \eqref{eq:PE-STS} using the aforementioned method.

Various processing, pooling, and combination functions can be selected. We adopt mean pooling and a combination function formed by cascading a linear function and an activation function. Since the pooling function is identical for every edge, condition (iii) is satisfied. We adopt linear processing, unless otherwise specified.

\vspace{-2mm} \subsection{Designing DTS-Power-GNN}\label{sec:DTS-GNN}
This GNN learns the power allocation
policy ${\vec{\bf P}_{\sf DTS}^t}=f_{\sf DTS}^{\sf power}({\hat{\bf G}^t})$, where ${\hat{\bf G}}^{t}=[({\hat{\bf G}_{1}}^{t})^{\sf T},\cdots,({\hat{\bf G}_{M}}^{t})^{\sf T}]\in{\mathbb C}^{K\times MK}$, and ${\hat{\bf G}}_{m}^{t}=[{\hat g}_{m_ik}^{t}]_{K\times K}$. Its output is the power allocation vector $\vec{{\hat { {\bf P}}}}_{\sf DTS}^t\!=\![{\hat p}_{{\sf DTS},11}^t,\cdots\!,{\hat p}_{{\sf DTS},1K}^t,\cdots\!,{\hat p}_{{\sf DTS},M1}^t,\cdots\!,{\hat p}_{{\sf DTS},MK}^t]\\\in{\mathbb R}^{ MK}$ in a subframe.

\subsubsection{Graph construction}
Since the policy is defined on two sets (i.e., the AN- and UE-sets), we construct a graph with two types of vertices: AN-vertices and UE-vertices, as illustrated in Fig. \ref{fig:graph_CSI}. Specifically, the equivalent antenna serving a UE is defined as an AN-vertex, and each UE is defined as a UE-vertex. No vertex has features or actions.

\vspace{-2mm}
\begin{figure}[!htp]
	\centering
	\includegraphics[scale=0.35]{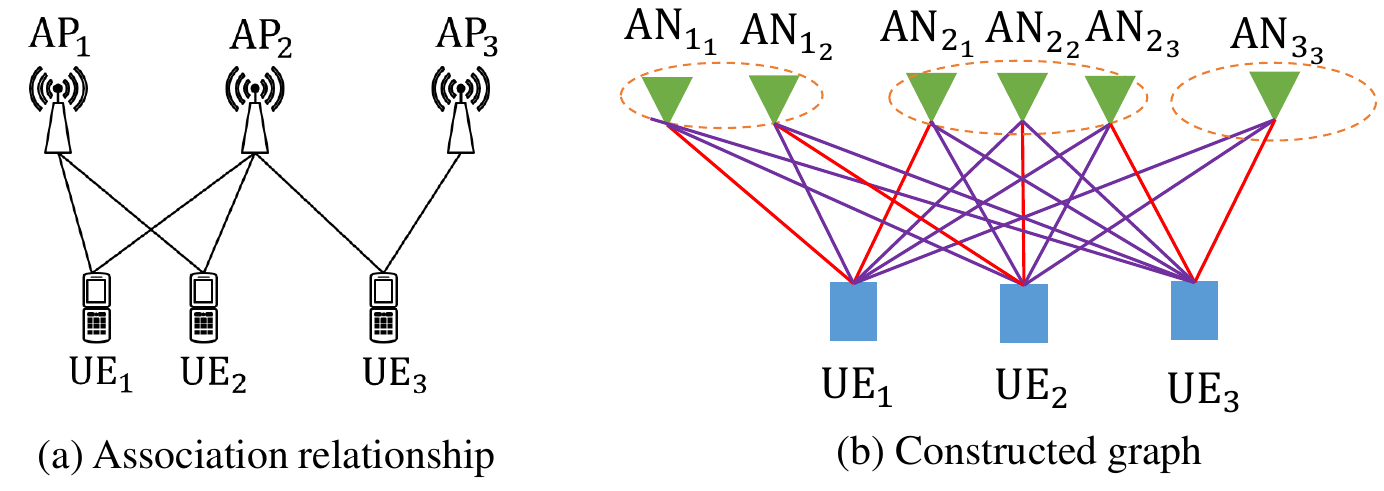}
    \vspace{-1mm}
	\caption{User association and graph for the DTS-Power-GNN. The red and purple lines represent SIG and INF edges, respectively.
} \label{fig:graph_CSI}
    \vspace{-4mm}
\end{figure}

The environment state ${\bf\hat G}^t$ has AN and UE dimensions. Hence, an edge exists between each AN-vertex (say ${\sf AN}_{m_i}$-vertex) and each UE-vertex (say ${\sf UE}_k$-vertex), denoted as edge $({\sf AN}_{m_i}, {\sf UE}_k)$, whose feature is ${{\hat g}_{m_ik}^t}$.

To satisfy the permutation property in \eqref{eq:PE-DTS-power}, the ordering of AN-vertices should be tied to the ordering of UE-vertices. For example, if ${\sf UE}_1$ and ${\sf UE}_2$ in Fig. \ref{fig:graph_CSI}(b) are swapped, the ANs serving ${\sf UE}_1$ should be swapped with those serving ${\sf UE}_2$ (e.g., ${\sf AN}_{1_1}$ is swapped with ${\sf AN}_{1_2}$). Then, edge $({\sf AN}_{1_1},{\sf UE}_1)$ and edge $({\sf AN}_{1_2},{\sf UE}_2)$ are swapped, and edge $({\sf AN}_{1_2},{\sf UE}_1)$ and edge $({\sf AN}_{1_1},{\sf UE}_2)$ are swapped. This indicates that edges $({\sf AN}_{m_k},{\sf UE}_k)$, $\forall m, k\in{{\mathcal A}_m}$ can be permuted among themselves, and the remaining edges can be permuted among themselves, but the two groups of edges cannot be permuted. Therefore, two types of edges are defined. Edges $(\text{AN}_{m_k}, \text{UE}_k)$ for every $k \in \mathcal A_m$ are called signal (SIG) edges, while all other edges are called interference (INF) edges.

%
Since $\vec{\bf P}_{\sf DTS}^t\in{\mathbb R}^{ MK}$ has AN dimension and ${\sf AN}_{m_k}$ only allocates power to ${\sf UE}_k$, the actions are defined on $\rm SIG$ edges, where ${\hat p}_{{\sf DTS},mk}^t$ denotes the action on edge $({\sf AN}_{m_k},{\sf UE}_k)$.

\subsubsection{Updating equations design}
Since the features and actions of the constructed graphs are defined on edges, we adopt edge-GNNs, which achieve performance comparable to vertex-GNNs with low training and inference time \cite{2024EdgeGNN-Peng}.
In each layer of an edge-GNN, the hidden representation of an edge is updated by first aggregating information from neighboring edges through processing and pooling functions, and then combining its own information with the aggregated information via a combination function.

To meet conditions (i) and (ii), different weight matrices are used when updating representations of $\rm SIG$ and $\rm INF$ edges.

{\textbf{Updating representations of $\rm SIG$ edges:}} The neighboring edges of $\rm SIG$ edge $({\sf AN}_{m_k},{\sf UE}_k)$ include all $\rm INF$ edges connected to ${\sf AN}_{m_k}$-vertex and all $\rm INF$ and $\rm SIG$ edges connected to ${\sf UE}_{k}$-vertex, as illustrated in Fig. \ref{fig:graph_CSI}(b).

To satisfy condition (i), three processing functions are used, parameterized by different weight matrices. The first is for neighboring $\rm INF$ edges connecting to an AN-vertex, the second is for neighboring $\rm INF$ edges connecting to a UE-vertex, and the third is for neighboring $\rm SIG$ edges. To satisfy condition (ii), all $\rm SIG$ edges use the same weight matrix for combination. Consequently, the hidden representation vector of $\rm SIG$ edge $({\sf AN}_{m_k},{\sf UE}_k)$ is updated  in the $l$th layer as
\begin{equation}\label{eq:update-AN-UE-edges}
	\begin{aligned}
		&\!\!\!\!\!{\bf d}_{{\sf AN}_{m_k}\!,{\sf UE}_k}^{(l)}\!\!\!\!=\!\phi({\bf Q}_{{\sf AN},1}{\bf d}_{{\sf AN}_{m_k}\!,{\sf UE}_k}^{(l-1)}\!+\!\frac{1}{K}\sum_{\substack{i=1\\i\neq k}}^K\! {\bf U}_{{\sf AN},1}{\bf d}_{{\sf AN}_{m_k}\!,{\sf UE}_i}^{(l-1)}\\
		&\!\!\!\!\!+\!\!\frac{1}{M|{\mathcal A}_j|}\!\!\!\sum_{j=1}^M\!\sum_{\substack{i\in{\mathcal A}_j\\i\neq k}}\!\!{\bf U}_{{\sf AN},2}{\bf d}_{{\sf AN}_{j_i},{\sf UE}_k}^{(l-1)}\!\!\!+\!\frac{1}{|{\mathcal S}_k|}\!\!\sum_{\substack{j\in{\mathcal S}_k\\j\neq m}}\!\!{\bf U}_{{\sf AN},3}{\bf d}_{{\sf AN}_{j_k},{\sf UE}_k}^{(l-1)}\!),
	\end{aligned}
\end{equation}
where $\phi(\cdot)$ is an activation function, ${\bf Q}_{{\sf AN},1}$ is the trainable weight matrix in the combination function, and ${\bf U}_{{\sf AN},1}\sim{\bf U}_{{\sf AN},3}$ are the trainable weight matrices in different processing functions. We omit the superscript $(l)$ of the activation function and weight matrices for notational simplification, and this omission applies in the sequel.

{\textbf{Updating representations of $\rm INF$ edges:}} The neighboring edges of $\rm INF$ edge $({\sf AN}_{m_i},{\sf UE}_k)$, $i\neq k$ include both $\rm INF$ and $\rm SIG$ edges connected by ${\sf AN}_{m_i}$-vertex or ${\sf UE}_{k}$-vertex, as illustrated in Fig. \ref{fig:graph_CSI}(b).

To meet condition (i), four processing functions are used, parameterized by different weight matrices. The first and second are for neighboring $\rm INF$ edges connecting to an  AN-vertex or a UE-vertex, respectively. The third and fourth are for neighboring $\rm SIG$ edges connecting to an AN-vertex or a UE-vertex, respectively.

To meet condition (ii), all $\rm INF$ edges use the same weight matrix for combination. The update equations for the hidden representations of $\rm INF$ edges are similar to \eqref{eq:update-AN-UE-edges}, with differences only in the sets of neighboring edges and the corresponding weight matrices. For conciseness, we no longer provide the update equations.

The power allocation vector yielded by the DTS-Power-GNN is the edge representations in the $L$th layer. In order to satisfy the constraint in \eqref{eq:cons-2}, we adopt the output-layer activation function designed in \cite{DataRateModel_GJ2023}.


\subsection{Designing DTS-Pilot-GNN}
This GNN learns the pilot assignment policy ${\bf X}_{\sf DTS}=f_{\sf DTS}^{\sf pilot}({\boldsymbol\beta},{\bf A})$, where ${\boldsymbol\beta}\in{\mathbb R}^{M\times K}$ and ${\bf A}\in{\mathbb R}^{M\times K}$. Its output is the pilot assignment matrix $\hat{\bf X}_{\sf DTS}=[{\hat x}_{{\sf DTS},gk}]_{K\times K}$.
\subsubsection{Graph construction} \label{CSI-pilot-graph}
Since the policy is defined on three sets (i.e., the AP-, UE-, and PS-sets), we construct a graph with three types of vertices: AP-, UE-, and PS-vertices, as illustrated in Fig. \ref{fig:graph_LSF}. No vertices have features or actions.

\vspace{-2mm}
\begin{figure}[!htp]
	\centering
	\includegraphics[scale=0.43]{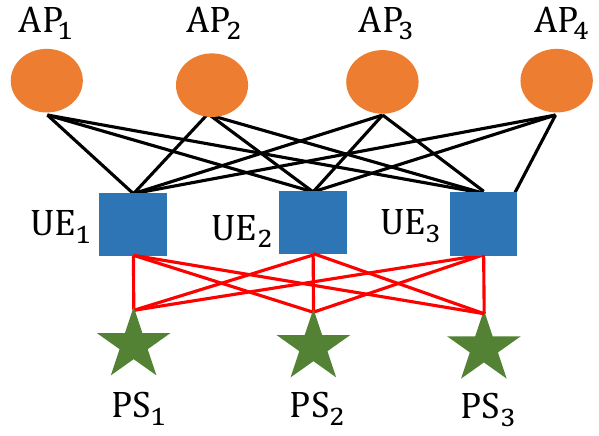}
	\vspace{-1mm}
	\caption{Graph constructed for the DTS-Pilot-GNN, where black and red lines represent AP-UE and PS-UE edges, respectively.
	} 	\label{fig:graph_LSF}
    \vspace{-2mm}
\end{figure}

The environmental states ${\boldsymbol\beta}$ and ${\bf A}$ have AP and UE dimensions. Hence, an edge exists between each AP-vertex and each UE-vertex (called AP-UE edge). Since ${\bf X}_{\sf DTS}\in\{0,1\}^{K\times K}$ has PS and UE dimensions, an edge exists between each PS-vertex and each UE-vertex  (called PS-UE edge).
To satisfy the permutation property in \eqref{eq:PE-DTS-pilot}, the AP-, UE-, and PS-vertices should be reordered independently. Thus, the AP-UE edges belong to a type, and the PS-UE edges belong to another type.

The AP-UE edges have features (i.e., ${\boldsymbol\beta}$ and ${\bf A}$) but do not have action. The feature vector of the edge between ${\sf AP}_m$-vertex and ${\sf UE}_k$-vertex is  $[{\beta}_{mk},{a}_{mk}]^{\sf T}$. The PS-UE edges have no feature but have action (i.e., ${\bf \hat X}_{\sf DTS}$). The action on edge $({\sf PS}_g,{\sf UE}_k)$ is ${\hat x}_{{\sf DTS},gk}$.
\subsubsection{Feature enhancement}\label{sec:FE}
Proposition \ref{Prop:proposition 2} means that the GNN with input ${\boldsymbol\beta},{\bf A}$ for pilot assignment cannot distinguish the edges whose actions should be different. Since ${\hat x}_{1k},\cdots,{\hat x}_{{K}k}$ are actions on edges $({\sf PS}_g,{\sf UE}_k)$, $\forall g$, \eqref{eq:proposition 2-1} indicates that the GNN cannot distinguish the PS-UE edges connecting to the same UE-vertex (e.g., edge $({\sf PS}_1,{\sf UE}_1)$, edge $({\sf PS}_2,{\sf UE}_1)$, and edge $({\sf PS}_3,{\sf UE}_1)$ in Fig. \ref{fig:graph_LSF}). Since ${\hat{\bf x}}_i$ and ${\hat{\bf x}}_j$ comprises actions on edges $({\sf PS}_g,{\sf UE}_i)$ and edges $({\sf PS}_g,{\sf UE}_j)$, $\forall g$, respectively, \eqref{eq:proposition 2-2} indicates that the GNN cannot distinguish the edges connecting a PS-vertex to ${\sf UE}_i$-vertex and ${\sf UE}_j$-vertex. For the example  in Fig. \ref{fig:graph_LSF}, if $i=1$ and $j=2$, edge $({\sf PS}_1,{\sf UE}_1)$ and edge $({\sf PS}_1,{\sf UE}_2)$ are undistinguishable, edge $({\sf PS}_2,{\sf UE}_1)$ and edge $({\sf PS}_2,{\sf UE}_2)$ are undistinguishable, and edge $({\sf PS}_3,{\sf UE}_1)$ and edge $({\sf PS}_3,{\sf UE}_2)$ are undistinguishable.

As analyzed in Section \ref{sec:Mapping}, the GNN without the distinguishing ability cannot learn the pilot assignment policy well. To address this issue, we consider a simple yet effective technique that augments the inputs of GNNs with random features \cite{RandomEquivalent,ExpressiveSurvey2025}. Specifically, we introduce artificial features into PS-UE edges by defining a random matrix $\boldsymbol{\Lambda}=[\lambda_{gk}]_{K\times K}$, where edge $({\sf PS}_g,{\sf UE}_k)$ has an artificial feature ${\lambda}_{gk}$.
Then, the designed GNN learns ${{\bf X}_{\sf DTS}}={f}_{\sf DTS}^{\sf pilot^+}({\boldsymbol\beta},{\bf A},\boldsymbol{\Lambda})$, which is a one-to-one mapping, rather than the one-to-more mapping ${\bf X}_{\sf DTS}=f_{\sf DTS}^{\sf pilot}({\boldsymbol\beta},{\bf A})$.

\subsubsection{Updating equations design}\label{sec:updating-equation}
Since the features and actions are defined on edges, we adopt edge-GNNs. To meet conditions (i) and (ii), different weight matrices are
used when updating representations of AP-UE and PS-UE edges.

{\textbf {Updating representations of AP-UE edges:}}
The neighboring edges of an AP-UE edge include both AP-UE and PS-UE edges. For example, as shown in Fig. \ref{fig:graph_LSF}, the neighboring edges of edge $({\sf AP}_1,{\sf UE}_1)$ are edges $({\sf AP}_m,{\sf UE}_1)$ for $m=2,3,4$, edges $({\sf AP}_1,{\sf UE}_k)$ for $k=2,3$, and edges $({\sf PS}_g,{\sf UE}_1)$ for $g=1,2,3$.
Since the features of PS-UE edges are artificially introduced for enhancing the distinguishing ability, the information from these edges is not useful for updating the representations of AP-UE edges. Thereby, only the information from neighboring AP-UE edges is aggregated.

To satisfy condition (i), two processing functions are used, parameterized by different weight matrices: one is for neighboring AP-UE edges connecting to a UE-vertex, and the other is for those connecting to an AP-vertex. To satisfy condition (ii), all AP-UE edges use the same weight matrix for combination. Then, the hidden representation of edge $({\sf AP}_m,{\sf UE}_k)$ is updated  in the $l$th layer as
\begin{equation} \label{eq:update-AP-UE-edges}
	\begin{aligned}
	\!\!\!\!\!{\bf d}_{{\sf AP}_m,{\sf UE}_k}^{(l)}\!\!\!=&\phi\!\left({\bf Q}_{{\sf AP},1}{\bf d}_{{\sf AP}_m,{\sf UE}_k}^{(l-1)}\!\!+\!\!\frac{1}{M}\!\!\!\sum\nolimits_{\substack{i=1\\i\neq m}}^{{M}}\!\!{\bf U}_{{\sf AP},1}{\bf d}_{{\sf AP}_i,{\sf UE}_k}^{(l-1)}\right.\\& +\left.\frac{1}{K}\sum\nolimits_{\substack{j=1,j\neq k}}^{{K}}{\bf U}_{{\sf AP},2}{\bf d}_{{\sf AP}_l,{\sf UE}_j}^{(l-1)}\right)
	\end{aligned}
\end{equation}
where
${\bf Q}_{{\sf AP},1}$,
${\bf U}_{{\sf AP},1}$ and ${\bf U}_{{\sf AP},2}$ are trainable weight matrices.

{\textbf {Updating representations of PS-UE edges:}}
The neighboring edges of a PS-UE edge include both AP-UE and PS-UE edges. Since the random features in $\boldsymbol{\Lambda}$ play a crucial role in distinguishing the actions defined on the  PS-UE edges, the information from all neighboring edges is aggregated.

To meet condition (i), three processing functions are used, parameterized by different weight matrices: the first is for neighboring AP-UE edges, the second is for neighboring PS-UE edges connecting to a PS-vertex, and the third is for neighboring PS-UE edges connecting to a UE-vertex. Since the same weight matrix is used to process the information from different UEs (i.e., the representations of the neighboring PS-UE edges connected to different UEs), such parameter-sharing struggles to capture pilot contamination that varies across UEs. Specifically, if the pilot contamination between two UEs using the same PS is weak, the UEs can share a PS to reduce pilot overhead. Otherwise, they should use different PSs. To improve learning performance, we introduce attention scores that can reflect the potential pilot contamination for processing the information from different UEs.

To meet condition (ii), all PS-UE edges use the same weight matrix for combination.
Then, the hidden representation of edge $({\sf PS}_{g},{\sf UE}_k)$ is updated  in the $l$th layer as
\begin{equation}\label{eq:update-PT-UE-edges}
	\begin{aligned}
		\!\!\!\!\!{\bf d}_{{\sf PS}_g,{\sf UE}_k}^{(l)}\!\!\!&=\phi\Big({\bf Q}_{{\sf PS},1}{\bf d}_{{\sf PS}_g,{\sf UE}_k}^{(l-1)}\!\!+\!\frac{1}{M}\!\!\sum\nolimits_{m=1}^M\!\!{\bf U}_{{\sf PS},1}{\bf d}_{{\sf AP}_m,{\sf UE}_k}^{(l-1)}\\
		&+\!\frac{1}{K}\!\sum\nolimits_{\substack{i=1,i\neq g}}^{K}{\bf U}_{{\sf PS},2}{\bf d}_{{\sf PS}_i,{\sf UE}_k}^{(l-1)}\\
		&+\!\frac{1}{K}\!\sum\nolimits_{\substack{j=1,j\neq k}}^K\left({\bf c}_{jk}^{(l)}\odot({\bf U}_{{\sf PS},3}{\bf d}_{{\sf PS}_g,{\sf UE}_j}^{(l-1)})\right)\Big),
	\end{aligned}
\end{equation}
where ${\bf c}_{jk}^{(l)}$ is an attention score designed as follows
\begin{equation}\label{eq:attention}
	\begin{aligned}
		{\bf c}_{jk}^{(l)} = \phi_{\sf A}\Big(\frac{1}{M}\sum\nolimits_{m=1}^M {({\bf U}_{{\sf PS},4}}{\bf d}_{{\sf AP}_m,{\sf UE}_j}^{(l - 1)} ) \\\odot ({\bf U}_{{\sf PS},5}{\bf d}_{{\sf AP}_m,{\sf UE}_k}^{(l - 1)})\Big),
	\end{aligned}
\end{equation}
${\bf Q}_{{\sf PS},1}$, ${\bf U}_{{\sf PS},1}\sim{\bf U}_{{\sf PS},5}$ are trainable weight matrices, and $\phi_{\sf A}(\cdot)$ is an activation function.

\begin{Remark}
The attention score can be interpreted by considering the hidden representation in the first layer, where ${\bf d}_{{\sf AP}_m,{\sf UE}_j}^{(0)}=[\beta_{mj}, a_{mj}]^{\sf T}$,
and ${\bf U}_{{\sf PS},4}={\bf U}_{{\sf PS},5}=[1,0]$. After omitting $\phi_{\sf A}(\cdot)$, we have ${c}_{j  k}^{(1)}=\frac{1}{M}{\boldsymbol{\beta}}_j^{\sf T}{\boldsymbol{\beta}_k}$, where ${\boldsymbol{\beta}_j}\triangleq[\beta_{1j},\cdots,\beta_{Mj}]^{\sf T}$ is the LSF channel vector of ${\sf UE}_j$. Clearly, ${c}_{j  k}^{(1)}$ reflects the similarity between LSF channel vectors of ${\sf UE}_j$ and ${\sf UE}_k$. A higher similarity indicates more severe pilot contamination if the two UEs use the same PS, as analyzed in the sequel.
Given a pilot assignment matrix ${\bf X}=[{\bf x}_1,\cdots,{\bf x}_K]$, the normalized mean square error of ${\hat{\bf h}}_{mk}$ when ${\bf h}_{mk}\sim{\mathcal CN}({\bf 0},\beta_{mk}{\bf I}_N)$ is \cite{foundation_CF_2021},
\begin{equation} \label{eq:NMSE}
	\!\!{\sf NMSE}_{mk}\!=\!1\!-\!\frac{\beta_{mk}^2}{\sum_{j=1}^K({\bf x}_k^{\sf T}{\bf x}_j){\beta_{mk}}{\beta_{mj}}\!+\!{\sigma_{{\sf AP},m}^2\beta_{mk}}/{p^{\sf ul}}}.
\end{equation}
The term ${\beta_{mk}}{\beta_{mj}}$ reflects the impact of pilot reuse between ${\sf UE}_j$ and ${\sf UE}_k$ on the estimation error. Larger values of ${\beta_{mk}}{\beta_{mj}}$ (and thus higher ${\boldsymbol{\beta}}_j^{\sf T}{\boldsymbol{\beta}_k}$) lead to larger estimation error, indicating more severe pilot contamination.
Thus, the attention score can capture the level of pilot contamination.
\end{Remark}

The pilot assignment matrix yielded by the DTS-Pilot-GNN is the processed representations of PS-UE edges in the $L$th layer
for satisfying the constraints in \eqref{eq:cons-1}. By using softmax function for processing, the action of edge $({\sf PS}_g,{\sf UE}_k)$ is
\begin{equation}\label{eq:pilot-cons}
	{\hat x}_{{\sf DTS},gk}=\exp({d}_{{\sf PS}_g,{\sf UE}_k}^{(L)})/\sum\nolimits_{i=1}^{K}\exp({d}_{{\sf PS}_i,{\sf UE}_k}^{(L)}).
\end{equation}

It can be shown that the policy learned by the DTS-Pilot-GNN has the same permutation property as $f_{\sf DTS}^{\sf pilot}(\cdot)$ after introducing the attention mechanism.

\vspace{-2mm} \subsection{Designing STS-GNN}
This GNN learns the STS policy $[{\bf X}_{\sf STS},{\bf P}_{\sf STS}]=f_{\sf STS}({\boldsymbol\beta},{\bf A})$. It outputs the pilot assignment matrix $\hat{\bf X}_{\sf STS}=[{\hat x}_{{\sf STS},gk}]_{K\times K}$ and the power allocation matrix ${\bf\hat P}_{\sf STS}=[{\hat p}_{{\sf STS},mk}]_{M\times K}$ simultaneously in a frame.
\subsubsection{Graph construction}
Same as the pilot assignment policy, the STS policy is defined on three normal sets: the AP-, UE-, and PS-sets. Hence, the graph constructed for the STS policy has the same topology (including vertices and edges as well as their types) and the same features as the graph for the pilot assignment policy illustrated in Fig. \ref{fig:graph_LSF}. The actions for pilot assignment are defined on PS-UE edges, where the action on edge $({\sf PS}_g,{\sf UE}_k)$ is ${\hat x}_{{\sf STS},gk}$. The actions for power allocation are defined on AP-UE edges, where the action on edge $({\sf AP}_m,{\sf UE}_k)$ is ${\hat p}_{{\sf STS},mk}$.

\subsubsection{Feature enhancement}\label{sec:FE-1}
With the same analysis as in Sec. \ref{sec:FE}, artificial features need to be introduced into PS-UE edges for distinguishing their representations. Hence, the input of the designed GNN is ${\boldsymbol\beta}$, ${\bf A}$, and $\boldsymbol{\Lambda}$.

\subsubsection{Updating equations design}
Since the parameter-sharing in a GNN depends on the graph topology, the hidden representations of the AP-UE edges and PS-UE edges are also updated with  \eqref{eq:update-AP-UE-edges} and \eqref{eq:update-PT-UE-edges}, respectively. The only difference from the DTS-Pilot-GNN is: the STS-GNN has two kinds of actions, where the pilot assignment matrix is obtained with \eqref{eq:pilot-cons}, and the power allocation matrix is obtained by applying the output-layer activation function from \cite{DataRateModel_GJ2023}.


\subsection{Training and Test Phases}

\subsubsection{Training phase}
Considering that the LSF and SSF channels change in two time scales, each sample contains $N_{\sf T}$ subsamples that share the same LSF channel but have different SSF channels. We train the GNNs via unsupervised learning.

To maximize ${\mathbb E}_{{\bf H}^t}\{\eta^t\}$, $\sum\nolimits_{i=1}^{N_{\sf T}}\eta_{n,i}/{N_{\sf T}}$ is used as part of the loss function for training, where $\eta_{n,i}$ is the net-SE corresponding to the $ i$th subsample of the $n$th sample.
To facilitate backpropagation during training, the GNN for pilot assignment outputs the probabilities of assigning PSs to UEs rather than binary values. To make the probabilities close to 0 or 1, the loss function includes a penalty term. Then, the loss function is
\begin{equation}\label{eq:loss_function}
	\begin{aligned}
	{\mathcal L}&=-\frac{1}{N_s}\sum\nolimits_{n=1}^{N_{\sf s}}\Big(\frac{1}{N_{\sf T}}\sum\nolimits_{i=1}^{N_{\sf T}}\eta_{n,i}\\
	&-w\sum\nolimits_{g = 1}^{K} \sum\nolimits_{k = 1}^K {\log_e {\hat x}_{gk,n}}{\log_e (1 -{\hat x}_{gk,n})}\Big),
	\end{aligned}
\end{equation}
where $N_{\sf s}$ is the number of samples in a batch,
${\hat x}_{gk,n}$ is the probability of assigning ${\sf PS}_g$ to ${\sf UE}_k$ for the $n$th sample, and $w$ is a hyper-parameter.

To jointly train the DTS-Pilot-GNN and DTS-Power-GNN, we propose a DTS framework. As shown in Fig. \ref{fig:framework-problem}(a), the channel estimation and beamforming modules are used to compute the equivalent channel gains in \eqref{eq:estimated-equivalent-channel}, which are the input of the DTS-Power-GNN. Multiple DTS-Power-GNNs with the same weight matrices are used to output the allocated powers for the $N_{\sf T}$ subsamples. The trainable weight matrices in the DTS-Pilot-GNN and DTS-Power-GNN are updated synchronously through backpropagation.

To train the STS-GNN in Fig. \ref{fig:framework-problem}(b), the beamforming
vectors are required to compute the loss function in \eqref{eq:loss_function} for backpropagation. This, in turn, necessitates the use of channel estimation and beamforming modules to compute these
vectors based on one of the outputs of the
STS-GNN (i.e., $\hat{\bf X}_{\sf STS}$).

\begin{figure}[!thp]
	\centering
	\includegraphics[scale=0.53]{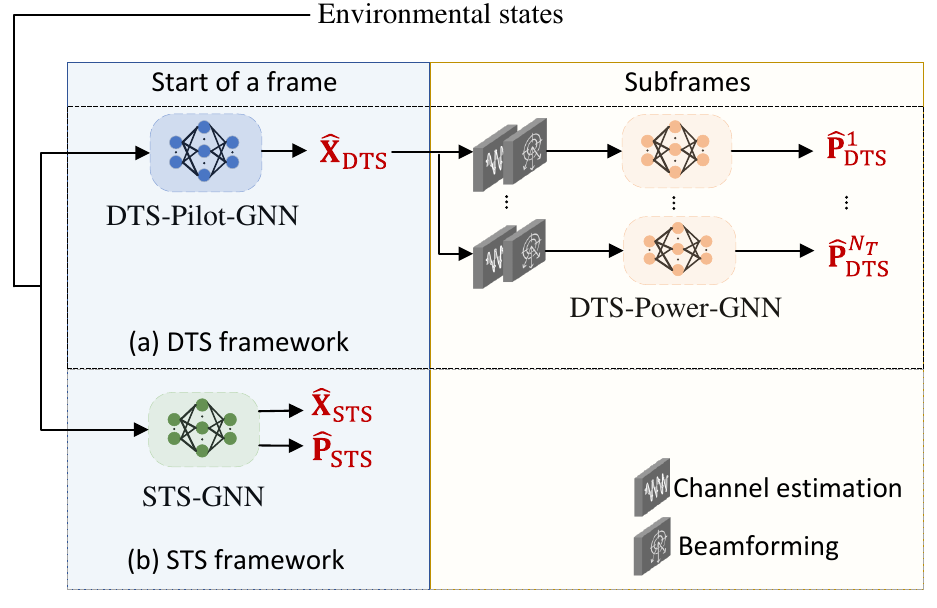}
	\vspace{-2mm}
	\caption{The DTS and STS frameworks.} \label{fig:framework-problem}
\vspace{-6mm}
\end{figure}

\subsubsection{Test phase}
The pilot assignment probability matrices output by the DTS-Pilot-GNN and STS-GNN are discretized by setting the maximal value in each column as 1 and the remaining values as 0. From these discretized matrices, we obtain $\mathbf{X}_{\sf o}\in\{0,1\}^{\tau_{\sf p}\times K}$ by removing all-zero rows, which simultaneously determines the pilot length $\tau_{\sf p}$.

As shown in Fig. \ref{fig:framework-problem}, the DTS-Pilot-GNN is executed at the start of a frame, and the DTS-Power-GNN is executed in each subframe. The STS-GNN is executed at the start of a frame.

The proposed DTS and STS frameworks include attention-based GNNs, which are called DTS-AGNN and STS-AGNN for short, respectively.

\section{Simulation Results}\label{sec:Result}
In this section, we evaluate the learning performance, generalizability, training complexity, and inference time of the proposed frameworks.

\begin{table}[htbp]
	\centering
	\caption{Simulation Setup \cite{2024_3GPP_901}}\label{table:setup}
	\begin{tabular}{c|c|c}
		\hline\hline
		Description&Notation& Value\\
		\hline
		Carrier frequency&$f_c$&6 $\sf GHz$\\
		\hline
		Bandwidth&$B$&20 $\sf MHz$\\
		\hline
		\makecell[c]{Default number of\\time slots in a subframe}&$\tau_{\sf c}$&200\\
		\hline
		\makecell[c]{Number of subframes}&$N_{\sf T}$&10\\
		\hline
		Uplink transmit power&$p^{\sf ul}$&23 $\sf dBm$\\
		\hline
		\makecell[c]{Default maximal\\ downlink transmit power} &$P_{\sf max}$&44 $\sf dBm$\\
		\hline
		Noise spectral density&$N_0$&-174 $\sf dBm/Hz$\\
		\hline
		Noise figure&$N_{\sf F}$&9 $\sf dB$\\
		\hline
		Height of AP antennas&-&10 $\sf m$\\
		\hline
		Height of UE antennas&-&1.5 $\sf m$\\
		\hline
		Default ISD &-&200 $\sf m$\\
		\hline
		\makecell[c]{Minimum distance\\ between AP and UE}&-&10 $\sf m$\\
		\hline\hline
	\end{tabular}
\end{table}

Consider a CF-MIMO system where the intersite distance (ISD) between APs is fixed and UEs are randomly placed. We consider an urban micro-cell (UMi) scenario. The channel model consists of LSF and Rayleigh fading. The LSF model is ${\beta}_{mk} ({\sf dB})=-32.4-20\log_{10}(f_c({\sf GHz}))-31.9\log_{10}(s_{mk}({\sf m}))+\chi$, where $s_{mk}$ is the distance between ${\sf AP}_m$ and ${\sf UE}_k$, and $\chi$ is the shadowing with a standard deviation of $8.2$ $\sf dB$ \cite{2024_3GPP_901}. The noise power is $\sigma_{{\sf AP},l}^2({\sf dBm})\!=\!\sigma_{{\sf UE},k}^2({\sf dBm})\!=\!N_0+10\log_{10}(B)+N_{\sf F}$. The simulation setup is provided in Table \ref{table:setup}.

The association matrix $\bf A$ is determined as follows. Each UE (e.g., ${\sf UE}_k$) first associates to the AP with the highest LSF channel gain, and then associates to ${\sf AP}_m$ if ${\beta}_{mk}\geq \rho$, where $\rho$ is a threshold that is tuned to maximize the net-SE. In the UMi scenario, $\rho$ is set as $\rho ({\sf dB})=-32.4-20\log_{10}(6)-31.9\log_{10}(200)$. We adopt the distributed regularized zero-forcing beamforming in \cite{foundation_CF_2021}, i.e., ${{\bf{w}}_{{m_k}}^{t}} = {({\sum _{i \in {{\cal A}_m}}}\!\!{{\hat{\bf{h}}}_{mi}^{t}}({\hat{\bf{h}}_{mi}^{t}})^{\rm{H}} \!\!+\!\! {\sum _{j \notin {{\cal A}_m}}}{\beta _{mj}}{{\bf{I}}_N}\!\!+\!\! \frac{\sigma_{{\sf UE},k}^2}{p^{\sf ul}}{{\bf{I}}_N})^{ - 1}}{\hat{\bf{h}}_{mk}^{t}}$.

5,000 samples are generated for training, and 100 samples are used for testing.

These settings are considered unless otherwise specified.

The optimizer is Adam, and batch normalization is applied. After fine-tuning, the elements in $\boldsymbol{\Lambda}$ are drawn from a uniform distribution over $[0, 1)$, and the activation functions of the combination functions in the hidden layers and the attention mechanism are ${\sf relu}(\cdot)$ and ${\sf tanh}(\cdot)$, respectively. Other fine-tuned hyper-parameters are as follows. The initial learning rate is 0.01. The sizes of hidden representation vectors are $[8,8,8,8,8,8,1]$ for the seven layers in DTS-Pilot-GNN, $[16,16,16,2]$ for the four layers in DTS-Power-GNN, and $[8,8,8,8,8,8,2]$ for the seven layers in STS-GNN. The parameters in the loss function are $N_{\sf s}=50$ and $w=0.2$.

All simulations are conducted on a computer equipped with one Intel i9-10940X CPU and one Nvidia RTX 3080Ti GPU.

\vspace{-2mm}
\subsection{Learning Performance} \label{sec:learning-performance}
The learning performance is measured by the average net-SE in \eqref{eq:Avg-net-SE} (hereafter referred to as net-SE for short).

We compare the net-SE of the proposed frameworks with those of the baselines, which include a numerical algorithm and three learning-based methods.

Since no existing algorithms can solve the joint optimization problem (which optimizes the pilot length) that we aim to address, we develop a numerical baseline by combining existing algorithms.
Specifically, the Dsatur algorithm in \cite{2016Dsatur} is used to generate the initial pilot assignment, and then the Tabu-based algorithm in \cite{2020Tabu} is used for reassignment with a tabu length of $K$ and a maximum of $10$ iterations. This combination harnesses the strengths of both algorithms: the Dsatur algorithm can adjust pilot length, while the Tabu-based algorithm can maximize the SE with a given pilot length.
Once the pilot assignment is determined, the weighted MMSE (WMMSE) algorithm in \cite{Imperfect_CSI_2019} is adopted for power allocation. This numerical baseline is named ``Dsatur-tabu+WMMSE".

Three learning-based baselines are listed below.

\begin{itemize}
    \item STS-AGNN-Fix ($\tau_{\sf p}=z$): The same as existing learning-based methods in \cite{2024JPCPA-CF-Khan,JPAPC2025}, this baseline solve the STS problem given predetermined pilot length $z$. For a fair comparison, the same STS framework is used as the proposed STS-AGNN, which however is trained on a graph containing $z$ PS-vertices.
         This baseline is used to show the performance gain from optimizing pilot length.
    \item STS-AGNN w/o FE: The STS framework that includes the GNN without feature-enhancement (FE), i.e., the inputs of the STS-AGNN do not include $\boldsymbol{\Lambda}$.
        This baseline is used to show the necessity of FE.
    \item DTS/STS-GNN: The DTS and STS frameworks that include GNNs without attention.
        This baseline is used to evaluate the contribution of the attention mechanism.
\end{itemize}

\subsubsection{Ablation experiments}
In Fig. \ref{fig:perf_tau}, we show the impact of the pilot-length optimization, feature enhancement, and attention mechanism on the net-SE by comparing the proposed DTS/STS-AGNN with the learning-based baselines. The observations from the results are listed as follows.

\textbf{Pilot-length optimization:}
 The net-SE of ``STS-AGNN-Fix ($\tau_{\sf p}=32$)" is lower than ``STS-AGNN-Fix ($\tau_{\sf p}=z$)" with $z=\{10,18\}$. ``STS-AGNN-Fix ($\tau_{\sf p}=10$)" performs better when $\tau_{\sf c}$ is small, while ``STS-AGNN-Fix ($\tau_{\sf p}=18$)" performs better when $\tau_{\sf c}$ is large. These results demonstrate that the pilot length should be dynamically adjusted according to the channel coherence time. ``STS-AGNN" consistently achieves higher net-SE than ``STS-AGNN-Fix ($\tau_{\sf p}=z$)" with $z=\{10,18,32\}$. These performance gains underscore the advantage of optimizing pilot length.

\textbf{Feature enhancement:}
``STS-AGNN" achieves significantly higher net-SE than ``STS-AGNN w/o FE", owing to the ability of distinguishing the PS-UE edges.

\textbf{Attention mechanism:}
By comparing ``DTS-AGNN" with ``DTS-GNN"  and ``STS-AGNN" with ``STS-GNN", we can observe the net-SE gain from the attention mechanism.

\vspace{-3mm}
\begin{figure}[!htp]
	\centering
	\includegraphics[width=0.38\textwidth]{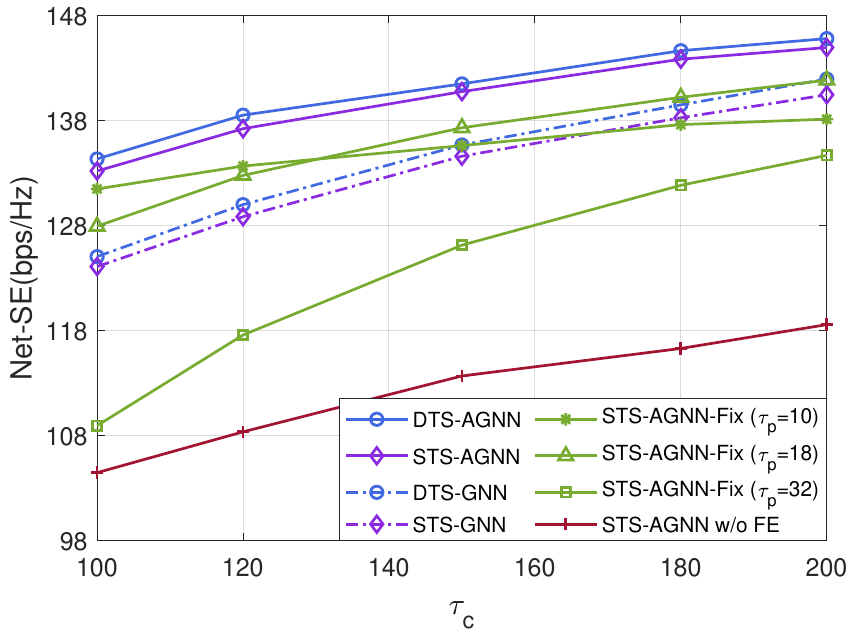}
	\vspace{-2mm}
	\caption{Learning performance, $M=7$, $N=8$, $K=35$.} \label{fig:perf_tau}
\end{figure}
\vspace{-6mm}
\begin{figure}[!htp]
	\centering
	\includegraphics[width=0.38\textwidth]{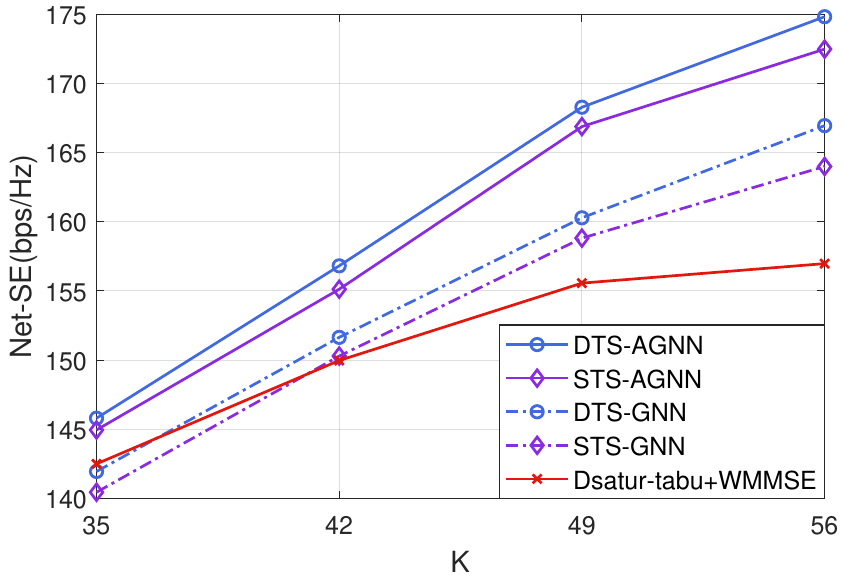}
\vspace{-2mm}
	\caption{Impact of the number of UEs, $M=7$, $N=8$.} \label{fig:perf_UE}
\vspace{-3mm}
\end{figure}

\subsubsection{Impact of the number of UEs}\label{sec:Varying_UEs}
In Fig. \ref{fig:perf_UE}, we provide the net-SE achieved by the GNNs and the numerical algorithm.
We can see that ``DTS/STS-AGNN" achieves higher net-SE than ``Dsatur-tabu+WMMSE", and the performance gap grows with $K$. The main reason is that ``Dsatur-tabu+WMMSE'' separately optimizes pilot length, pilot assignment, and power allocation.

The performance gap between ``DTS/STS-AGNN" and ``DTS/STS-GNN" increases with $K$. This is because the attention mechanism can reflect the pilot contamination, which becomes more severe for more UEs with given $M$ and $N$.

\subsubsection{Impact of the number of APs}
In Fig. \ref{fig:perf_AP}, we compare the GNNs and the numerical algorithm with different numbers of APs given a constant $K/M$. We can see that the performance gap between ``DTS/STS-AGNN" and ``DTS/STS-GNN" grows with $M$. It is because the growth of $K$ with $M$ requires intensified pilot reuse among UEs to reduce the pilot overhead. ``DTS/STS-AGNN" captures pilot contamination through the attention mechanism, enabling appropriate pilot reuse among UEs. In contrast, ``DTS/STS-GNN" lacks this mechanism, resulting in a larger number of assigned PSs and hence higher pilot overhead when $M$ is larger.
\vspace{-4mm}
\begin{figure}[!htp]
	\centering
	\includegraphics[width=0.38\textwidth]{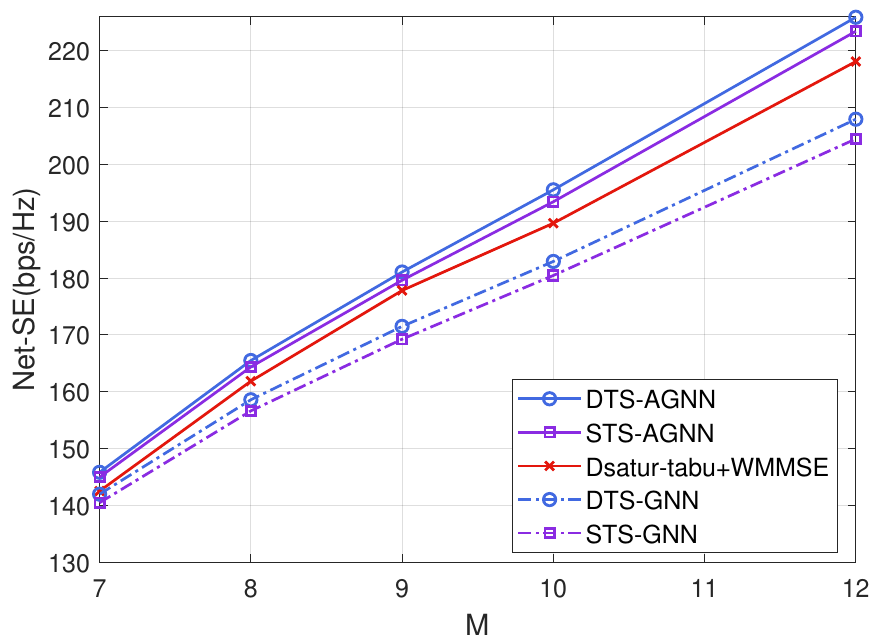}
\vspace{-2mm}
	\caption{Impact of the number of APs, $N=8$, $K/M=5$.} \label{fig:perf_AP}
    \vspace{-3mm}
\end{figure}

\subsubsection{Impact of the number of antennas at each AP}
In Table \ref{table:perf_N}, we show the impact of $N$.
We can see that ``DTS-AGNN" consistently outperforms ``STS-AGNN", but the gain is evident only when $N$ is small. When $N=2$, the net-SE achieved by ``STS-AGNN" falls below that of the numerical baseline and is only 92.35\% of the net-SE achieved by ``DTS-AGNN". This is because a large antenna array causes channel hardening, making power allocation primarily dependent on LSF channels.

\begin{table}[htbp]
	\centering
    \small
	\vspace{-3mm}
	\caption{Impact of $N$, $M=7$, $K=35$.}\label{table:perf_N}
	\begin{tabular}{c|c|c|c|c}
		\hline\hline
		{\diagbox[width=11em]{Net-SE(bps/Hz)}{$N$}}& 2&4 &8&16\\
		\hline
		DTS-AGNN&64.76&97.59&145.80&205.11\\
		\hline
		STS-AGNN&59.65&94.21&144.95&204.29\\
        \hline
        Dsater-tabu+WMMSE&62.87&93.46&142.50&203.95\\
		\hline\hline
	\end{tabular}
	\vspace{-4mm}
\end{table}

\vspace{-3mm}
\subsection{Generalizability to Unseen Scales and Channel Model}\label{sec:generalize}
Given that the learning-based baselines cannot perform well, we only assess the generalizability of the proposed DTS-AGNN and STS-AGNN. Since ``DTS-AGNN" achieves the highest net-SE, the generalization performance of a GNN is measured by a $\textbf {net-SE ratio}$, defined as $\frac{\text{net-SE of a GNN trained on a scenario}}{\text{net-SE of DTS-AGNN trained on the test scenario}}$.

In Table \ref{table:generalize_UE}, we provide the net-SE ratios of the GNNs trained on the scenario where $K=42$ and tested on the scenarios where $K$ ranges from 35 to 56. The results show that ``DTS/STS-AGNN" is well generalized across $K$.

\begin{table}[htbp]
	\centering
    \small
	\vspace{-2mm}
	\caption{Generalizability to $K$, $M=7$, $N=8$}\label{table:generalize_UE}
	\begin{tabular}{c|c|c|c}
		\hline\hline
		{$K$}& 35 &49&56\\
		\hline
		DTS-AGNN&98.93\%&98.87\%&97.22\%\\
		\hline
		STS-AGNN&97.49\%&98.26\%&96.57\%\\
		\hline\hline
	\end{tabular}
\end{table}

In Table \ref{table:generalize_N}, we present the net-SE ratios of the GNNs trained on the scenario where $N=16$ (with legend ``DTS/STS-AGNN ($N=16$)") and tested on the scenarios where $N$ ranges from 4 to 64. We can see that these GNNs can be well generalized to $N$ when $N \geq 8$. This is because the pilot assignment policies differ for $N=16$ and $N=4$.
When $N=16$, each AP is equipped with more antennas than its associated UEs. In this case, since an AP can serve all associated UEs, different PSs can be assigned to these UEs to avoid mutual pilot contamination, thereby mitigating intra-AP interference during data transmission.
In contrast, when $N=4$, each AP has fewer antennas than its associated UEs and thus serves only a subset of them through power allocation. Since the UEs associated with but not served by an AP (possibly served by other APs) do not cause intra-AP interference during data transmission, pilot reuse can be introduced among these UEs to reduce the number of assigned PSs and thereby reduce the pilot overhead.

\begin{table}[htbp]
	\centering
    \small
	\caption{Generalizability to $N$, $M=7$, $K=35$}\label{table:generalize_N}
	\resizebox{0.48\textwidth}{!}{
	\begin{tabular}{c|c|c|c|c|c}
		\hline\hline
		{\diagbox[width=7em]{\makecell[l]{Net-SE ratio}}{$N$}}& 4&8&16&32&64\\
		\hline
		\makecell[c]{DTS-AGNN\\($N=16$)} &87.54\%&96.71\%&100\%&99.90\%&98.77\%\\
		\hline
		\makecell[c]{STS-AGNN\\($N=16$)}&76.78\%&94.42\%&99.37\%&99.47\%&98.73\%\\
		\hline
		\makecell[c]{DTS-AGNN\\($N=8,16$)} &93.74\%&99.07\%&99.31\%&97.47\%&95.78\%\\
		\hline
		\makecell[c]{STS-AGNN\\($N=8,16$)}&82.02\%&97.57\%&98.50\%&96.00\%&94.00\%\\	
        \hline\hline
	\end{tabular}}
\end{table}

To improve the generalization performance for $N=4$, we train the GNNs on the scenarios where $N=8,~16$ (with legend ``DTS/STS-AGNN ($N=8,16$)"). We can see that
``DTS-AGNN" can be well generalized to $N=4$, but ``STS-AGNN" cannot. It is because the power allocation policy is highly sensitive to SSF channels when $N$ is small, whereas ``STS-AGNN" only has LSF channel gains as input.

In Table \ref{table:generalize_AP}, we provide the net-SE ratios of the GNNs trained on the UMi scenario where $M=7$ and tested on the UMi scenarios where $M$ varies from 8 to 12  (with legend ``DTS/STS-AGNN (UMi)").
To evaluate the generalization performance of GNNs across channel models, we also provide the net-SE ratios achieved by the GNNs trained in the UMi scenario with $M=7$ but tested in the urban macro-cell (UMa) scenario (with legend ``DTS/STS-AGNN (UMa)"). In the UMa scenario, the LSF channel model is ${\beta}_{mk} ({\sf dB})=-32.4-20\log_{10}(f_c({\sf GHz}))-30\log_{10}(s_{mk}({\sf m}))+\chi$ where the shadowing has a standard deviation of $7.8$ dB \cite{2024_3GPP_901}, the maximal downlink transmit power is $49~{\sf dBm}$, the AP antenna height is $25~{\sf m}$, the ISD is $500~{\sf m}$, the minimal AP-UE distance is $35~{\sf m}$, and the AP-UE association threshold is $\rho ({\sf dB})=-32.4-20\log_{10}(6)-30\log_{10}(450)$.
The results show that the GNNs can be well generalized to unseen numbers of APs and to the UMa scenario.

\begin{table}[htbp]
	\centering
    \small
	\caption{Generalizability to $M$ and channel model, $N=8$, $\frac{K}{M}=5$}\label{table:generalize_AP}
	\begin{tabular}{c|c|c|c|c}
		\hline\hline
		{\diagbox[width=10em]{Net-SE ratio}{$M$}}& 8&9&10&12\\
		\hline
		DTS-AGNN (UMi)&99.60\%&99.04\%&97.28\%&95.39\%\\
		\hline
		STS-AGNN (UMi)&99.24\%&98.82\%&96.98\%&94.71\%\\
        \hline
		DTS-AGNN (UMa)&98.84\%&98.12\%&97.12\%&95.34\%\\
		\hline
		STS-AGNN (UMa)&98.14\%&97.32\%&96.44\%&94.25\%\\
		\hline\hline
	\end{tabular}
	\vspace{-5mm}
\end{table}

\subsection{Training Complexity and Inference Time}
In Fig. \ref{fig:sample}, we provide the net-SE achieved by the GNNs trained with different numbers of samples. We can see that both ``DTS-AGNN" and ``STS-AGNN" need much fewer samples to achieve the performance of the numerical algorithm ``Dsatur-Tabu+WMMSE" (referred to as the baseline performance) than ``DTS-GNN". This indicates that the attention mechanism can reduce the training samples to achieve the baseline performance.
\begin{figure}[!htp]
	\centering
	\includegraphics[width=0.48\textwidth]{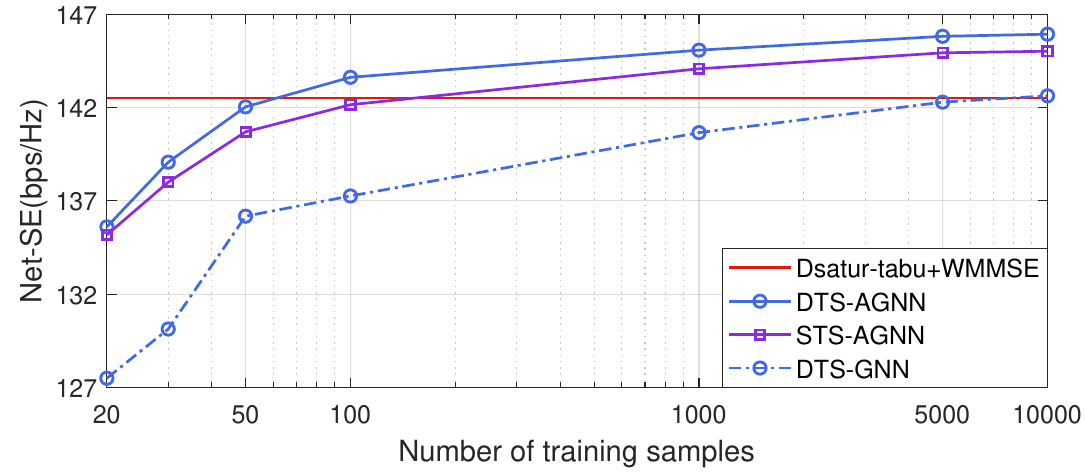}
	\caption{The net-SE achieved by GNNs and the numerical algorithm, $M=7$, $N=8$, $K=35$.} \label{fig:sample}
\end{figure}

In Table \ref{table: train_complexity}, we provide the training complexity of the GNNs in terms of sample, space, and time complexity. The sample complexity is the minimum number of samples required to achieve the baseline performance. The space complexity is the number of trainable parameters of the fine-tuned GNNs. The time complexity is the training time on GPU required to achieve the baseline performance with the minimal number of samples. We can see that ``DTS-AGNN" requires the fewest training samples and shortest training time. However, its space complexity is higher than ``STS-AGNN" because it consists of two GNNs, whereas ``STS-AGNN" contains only a single GNN. Without the attention mechanism, ``DTS-GNN" needs much more training samples and trainable parameters, as well as much longer training time than the other two GNNs.

\begin{table}[htbp]
	\centering
	\small
	\vspace{-1mm}
	\caption{Training complexity ($M=7$, $N=8$, $K=35$)}\label{table: train_complexity}
	\begin{tabular}{c|c|c|c}
		\hline\hline
		GNNs& Sample&Space &\makecell[c]{Training time}\\
		\hline
		DTS-AGNN&90&3012&41.89 s\\
		\hline
		STS-AGNN&170&2652&49.37 s\\
		\hline
		DTS-GNN&8000&17180&29.94 min\\
		\hline\hline
	\end{tabular}
\end{table}

In Table \ref{table: inference_time}, we provide the total inference time of the GNNs and the numerical algorithm in a frame. Since the numerical algorithm is usually executed on a CPU, we also evaluate the inference time of the GNNs on CPU for a fair comparison. It is shown that ``Dsatur-Tabu+WMMSE" requires dramatically longer running time than the GNNs.
``STS-AGNN" has the shortest inference time. It is because the STS-AGNN is executed once, whereas the DTS-Power-GNN in the DTS framework is executed $N_{\sf T}$ times, as shown in Fig. \ref{fig:framework-problem}. Since the inference of GNNs can be executed in parallel \cite{CPY2025}, the inference time of the GNNs is also evaluated on the GPU, which is much shorter than that on the CPU.

\begin{table}[htbp]
	\centering
	\vspace{-1mm}
    \small
	\caption{Inference time (millisecond) ($M=7$, $N=8$)}\label{table: inference_time}
	\begin{tabular}{c|c|c|c|c}
		\hline\hline
		\multicolumn{2}{c|}{\diagbox[width=15em]{Methods}{$K$}}&{35}&
		{42}&{49}\\
		\hline
		\multirow{2}{*}{DTS-AGNN}&CPU&48.7&56.5&116.9\\
		\cline{2-5}
		&GPU&28.9&29.5&30.2\\
		\hline
		\multirow{2}{*}{STS-AGNN}&CPU&20.5&25.2&33.5\\
		\cline{2-5}
		&GPU&9.8&10.3&11.1\\
		\hline
		\multirow{2}{*}{DTS-GNN}&CPU&122.5&155.8&189.2\\
		\cline{2-5}
		&GPU&34.2&34.9&36.4\\
		\hline
		Dsatur-Tabu+WMMSE&CPU&246381&315613&376011\\
		\hline\hline
	\end{tabular}
\end{table}

\section{Conclusion}\label{sec:Conclusion}
In this paper, we developed GNN-based frameworks for the joint optimization of pilot length, pilot assignment, and power allocation in both DTS and STS scenarios. The proposed GNNs satisfy the permutation properties of optimal policies, which not only need low training complexity but also support size generalizability. Notably, owing to their generalizability to different numbers of UEs, the GNNs can output pilot assignment matrices of variable size.
For the GNNs dedicated to pilot assignment, both the integrated feature enhancement technique and the contamination-aware attention mechanism boost the learning performance while reducing the training complexity. Simulation results demonstrated that the proposed DTS-AGNN and STS-AGNN consistently outperform baseline methods in terms of net-SE, training complexity, and inference latency. Meanwhile, the DTS-AGNN and STS-AGNN can be well generalized across different problem scales and channel models. Besides, the STS-AGNN performs closely to the DTS-AGNN as the number of antennas increases, with lower space complexity and inference latency.


\bibliography{ref}
\bibliographystyle{IEEEtran}

\end{document}